\newtheorem{definition}{Definition}
\newtheorem{lemma}{Lemma}
\newtheorem{observation}{Observation}
\newtheorem{theorem}{Theorem}
\newtheorem{claim}{Claim}
\newcommand{\lef}{\texttt{left}}
\newcommand{\righ}{\texttt{right}}
\newcommand{\gap}{\texttt{gap}}
\newcommand{\num}{\texttt{num}}
\newcommand{\out}{\texttt{out}}
\newcommand{\tup}{\texttt{tup}}
\newcommand{\mmin}{\texttt{MIN}}
\newcommand{\mmax}{\texttt{MAX}}
\newcommand{\var}{\texttt{Var}}
\begin{document}

\sloppy
\author{
Alexander Kulikov\thanks{Steklov Mathematical Institute at St.~Petersburg, Russian Academy of Sciences, email: \href{mailto:kulikov@logic.pdmi.ras.ru}{kulikov@logic.pdmi.ras.ru}}
\and
Ivan Mikhailin\thanks{Department of Computer Science and Engineering, University
of California, San Diego, email: \href{mailto:imikhail@eng.ucsd.edu}{imikhail@eng.ucsd.edu}} \and
Andrey Mokhov\thanks{School of Engineering, Newcastle University, UK, email: \href{mailto:andrey.mokhov@ncl.ac.uk}{andrey.mokhov@ncl.ac.uk}}
\and
Vladimir Podolskii\thanks{Steklov Mathematical Institute, Russian Academy of Sciences and
National Research University Higher School of Economics, Moscow, email: \href{mailto:podolskii@mi-ras.ru}{podolskii@mi-ras.ru}}}
\date{}
\title{Complexity of Linear Operators}
\maketitle

\begin{abstract}
Let $A \in \{0,1\}^{n \times n}$ be a~matrix with $z$~zeroes
and $u$~ones and $x$ be an~$n$-dimensional vector of
formal variables over a~semigroup $(S, \circ)$.
How many semigroup operations are required to compute the linear operator $Ax$?

As we observe in this paper, this problem contains as a special case the well-known
range queries problem and has a~rich variety of applications in
such areas as graph algorithms, functional programming, circuit complexity,
and others. It is easy to compute $Ax$ using $O(u)$ semigroup
operations.
The main question studied in this paper is:
can $Ax$~be computed using $O(z)$ semigroup operations?
We prove that in general this is not possible: there exists
a~matrix $A \in \{0,1\}^{n \times n}$ with exactly two zeroes in every row
(hence $z=2n$) whose complexity is $\Theta(n\alpha(n))$
where $\alpha(n)$ is the inverse Ackermann function.
However, for the case when the semigroup is commutative,
we give a~constructive proof of an~$O(z)$ upper bound.
This implies that in commutative settings, complements of~sparse
matrices can
be processed as efficiently as sparse matrices (though the
corresponding
algorithms are more involved). Note that this covers the
cases of Boolean and tropical semirings that have numerous
applications, e.g., in graph theory.

As a~simple application of the presented linear-size construction,
we show
how to multiply two $n\times n$ matrices over an arbitrary
semiring in $O(n^2)$
time if one of these matrices is a~0/1-matrix with $O(n)$~zeroes
(i.e., a~complement of a~sparse matrix).
\end{abstract}

\clearpage
\tableofcontents
\clearpage

\section{Introduction}
\subsection{Problem Statement and New Results}

Let $A \in \{0,1\}^{n \times n}$ be a~matrix with $z$~zeroes
and $u$~ones, and $x=(x_1, \dotsc, x_n)$~be an~$n$-dimensional vector
of formal variables over a~semigroup~$(S, \circ)$. In this paper,
we study the
complexity of the \emph{linear operator}~$Ax$,
i.e., how many semigroup
operations are required to compute a~vector whose $i$-th element is
\[
\sum_{1 \le j \le n\,\bigwedge\,A_{ij}=1}x_j
\]
where the summation is over the semigroup operation~$\circ$.\footnote{Note that the result of
summation is undefined in case of an all-zero row, because semigroups have no
neutral element in general. One can trivially sidestep this technical issue by
adding an all-one column~$n+1$ to the matrix~$A$, as well as the neutral element
$x_{n+1}$ into the vector. Alternatively, we could switch from semigroups to
\emph{monoids}, but we choose not to do that, since we have no use
for the neutral element and associated laws in the rest of the paper.}
More specifically, we are interested in lower and
upper bounds involving~$z$ and~$u$.
Matrices with $u=O(n)$ are usually called \emph{sparse},
whereas matrices with $z=O(n)$
are called \emph{complements of sparse matrices}.
Computing all $n$~outputs
of~$Ax$ directly (i.e. using the above definition) takes
$O(u)$ semigroup operations.
The main question studied in this paper is:
can $Ax$~be computed using $O(z)$ semigroup
operations? Note that it is easy to achieve $O(z)$ complexity if $\circ$ has an
inverse. Indeed, in this case $Ax$~can be computed via subtraction:
$Ax = (U-\overline{A})x = Ux - \overline{A}x$, where $U$ is the all-ones matrix
whose linear operator can be computed trivially using $O(n)$ semigroup
operations, and $\overline{A}$ is the complement of~$A$ and therefore has only
$z = O(n)$ ones.

\subsubsection{Commutative Case}
Our first main result shows that in the commutative case, complements
of sparse matrices can be processed as
efficiently as sparse matrices. Specifically, we prove
that if the semigroup is commutative, $Ax$ can be computed in $O(z)$ semigroup
operations; or, more formally, there exists
a~circuit of size $O(z)$ that uses $x=(x_1, \dotsc, x_n)$ as
an~input and computes $Ax$ by only applying the semigroup
operation~$\circ$ (we provide the formal definition of the
computational model in Section~\ref{subsec:circuits}). Moreover,
the constructed circuits are \emph{uniform} in the sense that they
can be generated by an~efficient algorithm. Hence, our circuits
correspond to an~elementary algorithm that uses no tricks like examining the
values $x_j$, i.e., the semigroup operation $\circ$ is applied in a~(carefully
chosen) order that is independent of the specific input~$x$.

\begin{restatable}{theorem}{upperthm}
\label{thm:upperbound}
Let $(S, \circ)$~be a~commutative semigroup,
and $A \in \{0,1\}^{n \times n}$ be a~matrix
with~$z=\Omega(n)$ zeroes.
There exists a~circuit of size $O(z)$ that uses
a~vector $x = (x_1,\ldots, x_n)$ of formal variables as an input,
uses only the semigroup operation~$\circ$ at internal gates,
and outputs $Ax$. Moreover, there exists a~randomized
algorithm that takes the positions of $z$~zeroes of~$A$
as an input and outputs such a~circuit in time $O(z)$
with probability at least $1-\frac{O(\log^5n)}{n}$. There also
exists a~deterministic algorithm with running time $O(z+n\log^4n)$.
\end{restatable}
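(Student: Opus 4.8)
The plan is to reduce to the ``free'' commutative semigroup and then simulate, \emph{without subtraction}, the trivial identity $Ax=Ux-\overline A x$ mentioned above. First, note it suffices to build the desired circuit over the free commutative semigroup $(\mathbb{Z}_{\ge 0}^n\setminus\{0\},+)$: a circuit over this semigroup that outputs the rows of $A$ --- written as $0/1$ vectors, i.e.\ as $\mathbf{1}-\chi_{Z_i}$ where $\mathbf{1}$ is the all-ones vector and $Z_i$ is the zero-set of row $i$, so $\sum_i|Z_i|=z$ --- works verbatim over every commutative semigroup, since it only ever applies $\circ$. Thus the task is purely combinatorial: using only vector additions, starting from the standard basis vectors $e_1,\dots,e_n$, produce all vectors $r_i=\mathbf{1}-\chi_{Z_i}$ using $O(z)$ additions, and do so by an efficient (uniform) algorithm.

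Second, I would dispose of the easy rows. All rows with $Z_i=\emptyset$ equal $\sigma:=e_1+\dots+e_n$, computed once in $n-1$ additions, which $z=\Omega(n)$ pays for; all-zero rows are handled by the sentinel column of the footnote. A row with $|Z_i|\ge\varepsilon n$ has support of size $n-|Z_i|=O(|Z_i|)$ and is built directly, and since there are only $O(z/n)$ such rows this costs $O(z)$ in total. This leaves the rows with $1\le|Z_i|<\varepsilon n$ --- the genuine ``complement of sparse'' case --- which we must process in $O(z+n)=O(z)$ additions; this is exactly where commutativity must be used to sidestep the $\alpha(n)$ factor that is unavoidable for the range-queries problem hidden inside the non-commutative case.

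Third --- the heart of the argument --- I would impose a balanced hierarchical decomposition of the columns $[n]$: a laminar family of ``blocks'', each block of size $s$ split into $\Theta(\sqrt s)$ sub-blocks of size $\Theta(\sqrt s)$, giving $O(\log\log n)$ levels, $O(n)$ blocks, and precomputed block-sums $P_B=\sum_{j\in B}e_j$ for all $B$ in $O(n)$ additions. For a row $i$, look at the coarsest level at which its zeros fail to all collapse into one sub-block: the contribution of that common block $K$ decomposes, using commutativity, as the composition of (a) the complement-sum over all columns \emph{outside} $K$, which is itself a complement-of-sparse instance over the few block-sums at that level --- handled recursively, and, once the block-universe is small enough, handled by brute force, exploiting that over a universe of $m$ elements one can precompute \emph{all} ``leave-one-out'', ``leave-two-out'', and bounded-complement sums in $m^{O(1)}$ additions, which is cheap when $m$ is tiny --- and (b) recursively computed partial sums inside those sub-blocks of $K$ that straddle the boundary of $r_i$. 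The accounting principle is that a block which is ``nontrivial'' for row $i$ (meeting both $Z_i$ and its complement) can be charged to a zero of row $i$ lying in it, while a block lying entirely in $Z_i$ already costs as many zeros as its size; summing these charges over rows must telescope to $O(z)$. I expect this last point to be the main obstacle: making the charges land on $O(z)$ rather than $O(z\cdot\mathrm{polyloglog}\,n)$, and keeping the ``recurse vs.\ brute-force'' threshold consistent across the $\sqrt{\phantom{s}}$-schedule, requires a delicate, essentially global choice of the hierarchy rather than a naive level-by-level bound --- this is the ``more involved'' part alluded to in the introduction.

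Finally, for the algorithmic claims I would build the hierarchy, bucket each of the $z$ zeros into the $O(\log n)$ blocks containing it, and assemble the auxiliary range structures. Deterministically, the bucketing/sorting at each level and the maintenance of balanced-tree range structures costs an extra $O(n\log^4 n)$ on top of the $O(z)$ needed to emit the gates, giving the stated $O(z+n\log^4 n)$ bound. Randomizing replaces the sorting by hashing and a random load-balancing of zeros into blocks, yielding $O(z)$ total time; the algorithm then fails to meet this bound only when some bucket overflows its expected size or a hash collides, and a union bound over the $O(\log n)$ levels and $O(n)$ blocks bounds this by $O(\log^5 n)/n$. In all cases the emitted circuit itself has $O(z)$ gates, because the extra polylogarithmic work lives entirely in the bookkeeping and not in the additions that are output.
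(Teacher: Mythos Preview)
Your proposal takes a route that is genuinely different from the paper's, and it has a real gap at precisely the point you yourself flag.

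The paper does \emph{not} use a hierarchical $\sqrt{\phantom{s}}$-ary block decomposition. Its proof has two ingredients. First, rows are split by a threshold: rows with at least $\log n$ zeroes form $R_0$ (so $|R_0|\le z/\log n$), and the rest form $R_1$. For $R_0$ one builds a \emph{regular} circuit for the transpose $C\times R_0$ using an $O(t\log t)$ range-query structure on $t=|R_0|$ inputs (Lemma~\ref{lemma:decompose}) and then reverses all wires (Observation~\ref{obs:transpose}); since $t\le z/\log n$ this costs $O(z)$. Second --- and this is the idea you are missing --- for $R_1$ one \emph{permutes the columns} so that in every row the ``short'' ranges (length $<\log n$) between consecutive zeroes have total length $O(n/\log n)$; commutativity is used exactly here, to allow an arbitrary column order. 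After permutation, long ranges are served by a flat two-level block structure of size $O(n)$ (Lemma~\ref{lemma:blocks}), and the short ranges are computed na\"ively. The randomized algorithm simply takes a uniformly random permutation and applies Markov's inequality to the expected total short-range length (this is where the $O(\log^5 n)/n$ comes from); the deterministic algorithm is a greedy column-by-column construction of such a permutation, and its bookkeeping is what costs $O(n\log^4 n)$.

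In your plan, the crucial step is the charging argument that the total number of ``nontrivial'' blocks over all rows is $O(z)$ rather than $O(z\cdot\mathrm{poly}\log\log n)$, and you explicitly say you do not know how to close it. Without a new idea this gap is real: a zero of row $i$ lies in one block at each of the $\Theta(\log\log n)$ levels, so the naive charge is $\Theta(z\log\log n)$, and the nested ``complement over block-sums'' subproblems add further cost you have not bounded. Your hierarchical scheme is essentially attacking the (harder) problem of answering arbitrary complement queries against a fixed column order; the paper sidesteps this entirely by choosing the column order, which is why its argument is short and why commutativity is visibly essential. Your explanations for the specific constants in the algorithmic bounds (hashing, bucket overflows, balanced trees) are therefore also off the mark: those bounds in the paper arise from the permutation step, not from any hierarchy.
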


We state the result for square matrices to simplify the presentation. Theorem~\ref{thm:upperbound} generalizes easily
to show that $Ax$ for a~matrix $A \in \{0,1\}^{m \times n}$ with $z=\Omega(n)$ zeroes can be computed using $O(m+z)$ semigroup operations. Also,
we assume that $z=\Omega(n)$ to be able to state an upper bound $O(z)$ instead of $O(z+n)$. Note that when
$z<n$, the matrix~$A$ is forced to contain all-one rows
that can be computed trivially.

The following corollary generalizes Theorem~\ref{thm:upperbound}
from vectors to matrices.

\begin{restatable}{corollary}{matrixmultcor}
\label{cor:matrixmultiplication}
Let $(S, \circ)$~be a~commutative semigroup.
There exists a~deterministic algorithm that takes
a~matrix $A \in \{0,1\}^{n \times n}$ with
$z=O(n)$~zeroes
and a~matrix $B \in S^{n \times n}$ and computes
the product $AB$ in time $O(n^2)$.
\end{restatable}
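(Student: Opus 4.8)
The plan is to reduce the matrix product to $n$ evaluations of a single linear-operator circuit. Writing $B=(b_1\mid\cdots\mid b_n)$ in terms of its columns, the $j$-th column of $AB$ is exactly $Ab_j$, i.e.\ the linear operator determined by $A$ applied to the vector $b_j\in S^n$. So it suffices to (i) build, once, a circuit $C$ of size $O(z)$ that computes $Ax$ for a vector $x$ of formal variables, and then (ii) evaluate $C$ on each of the $n$ column vectors $b_1,\dots,b_n$. Step~(i) uses the deterministic part of Theorem~\ref{thm:upperbound}, which produces such a $C$ in time $O(z+n\log^4 n)$; since $z=O(n)$ this is $O(n\log^4 n)=O(n^2)$, so the one-time preprocessing fits the budget. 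For step~(ii), $C$ has $O(z)=O(n)$ gates, each a single application of $\circ$, so evaluating it on a concrete input from $S^n$ means processing its gates in topological order at a cost of $O(z)=O(n)$ semigroup operations (plus $O(n)$ bookkeeping) per column; summing over the $n$ columns gives $O(n^2)$, and adding the preprocessing keeps the total at $O(n^2)$.

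A minor technical point is that Theorem~\ref{thm:upperbound} assumes $z=\Omega(n)$ while the corollary only assumes $z=O(n)$, so I would first normalize to $z=\Theta(n)$. If $z<n$, then $A$ has all-one rows; for such a row $i$ we have $(AB)_{ij}=\sum_k B_{kj}$, the sum of the $j$-th column of $B$, and all $n$ of these sums can be precomputed in $O(n^2)$ and copied into every all-one row of $AB$. For the remaining rows, each of which contains a zero, I would append $\lceil n/2\rceil$ auxiliary rows with exactly two zeroes apiece; this brings the zero count into the range $\Theta(n)$ while keeping both the number of rows and the number of zeroes $O(n)$, so the $m\times n$ generalization of Theorem~\ref{thm:upperbound} mentioned after its statement applies, and the auxiliary outputs are simply discarded. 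Hence we may assume $z=\Theta(n)$ at no cost.

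This is essentially a direct corollary, so I do not expect a real obstacle; the two points worth stating carefully are that the \emph{same} circuit $C$ is reused across all $n$ columns, so its $O(z)$ construction cost is paid once rather than $n$ times, and that substituting semigroup elements for the formal variables is legitimate — the circuit uses only $\circ$ and no property of the inputs $x_i$, so on the substitution $x:=b_j$ it computes exactly $Ab_j$. Beyond the $z=\Omega(n)$ normalization above, everything else is routine accounting.
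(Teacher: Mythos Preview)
Your proof is correct and follows essentially the same approach as the paper: build the circuit once via the deterministic part of Theorem~\ref{thm:upperbound}, then evaluate it in topological order on each of the $n$ columns of $B$. Your extra normalization to ensure $z=\Theta(n)$ is a nice bit of care that the paper does not spell out.
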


\subsubsection{Non-commutative Case}
As our second main result, we show that \emph{commutativity is essential}: for
a~faithful non-commutative semigroup~$S$
(the notion of faithful non-commutative semigroup  is made formal
later in the text), the minimum number of semigroup operations
required to compute $Ax$ for a~matrix
$A \in \{0,1\}^{n \times n}$ with $z=O(n)$ zeroes is
$\Theta(n\alpha(n))$, where $\alpha(n)$ is the inverse Ackermann function.

\begin{restatable}{theorem}{lowerthm}
\label{thm:lowerbound}
Let $(S, \circ)$ be a~faithful non-commutative semigroup, $x = (x_1,\ldots, x_n)$ be
a~vector of formal variables, and $A \in \{0,1\}^{n \times n}$
be a~matrix with $O(n)$ zeroes. Then $Ax$ is computable
using $O(n\alpha(n))$ semigroup operations, where $\alpha(n)$
is the inverse Ackermann function. Moreover, there exists
a~matrix~$A \in \{0,1\}^{n \times n}$ with exactly two zeroes
in every row such that the minimum number of semigroup
operations
required to compute~$Ax$ is $\Omega(n\alpha(n))$.
\end{restatable}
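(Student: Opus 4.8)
The plan is to prove the two halves separately, since the upper bound $O(n\alpha(n))$ and the matching lower bound $\Omega(n\alpha(n))$ rely on rather different ideas, both of which are classical in the range-queries literature (Yao, Tarjan, Chazelle--Rosenberg, Alon--Schieber).

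For the upper bound, I would reduce computing $Ax$ to a semigroup range-sum problem and invoke (a variant of) the known $O(n\alpha(n))$ construction for one-dimensional semigroup range queries. Concretely, each row $i$ of $A$ has at most a constant number of zeroes, so the set of columns $j$ with $A_{ij}=1$ is a union of $O(1)$ contiguous intervals of $\{1,\dots,n\}$; thus the $i$-th output of $Ax$ is a product of $O(1)$ interval-sums $x_a\circ x_{a+1}\circ\cdots\circ x_b$. It therefore suffices to build a circuit of size $O(n\alpha(n))$ that, using only $\circ$, simultaneously computes some prescribed family of $O(n)$ interval-sums over the ordered sequence $x_1,\dots,x_n$; this is exactly the offline interval-aggregation problem solved by the Yao / Alon--Schieber hierarchical decomposition, whose size is governed by the inverse-Ackermann function. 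The only care needed is that this construction uses associativity but not commutativity, which is the case for the standard balanced-hierarchy construction, so it applies to any (faithful non-commutative) semigroup. Summing the $O(1)$ pieces for each of the $n$ outputs adds only $O(n)$ further gates.

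For the lower bound, I would first pin down the meaning of \emph{faithful non-commutative semigroup}: there exist elements (or a quotient/sub-structure) witnessing that a product $x_{j_1}\circ\cdots\circ x_{j_k}$ of distinct formal variables is determined, as a semigroup word, by the \emph{sequence} $(j_1,\dots,j_k)$ and not merely by the set $\{j_1,\dots,j_k\}$ — so a circuit over $S$ that outputs such a product must, in a combinatorial sense, "spell out" that ordered word, and gates can only be reused between two outputs when the corresponding ordered subwords coincide. This lets me transfer a purely combinatorial lower bound into a semigroup-circuit lower bound. The combinatorial core is the classical $\Omega(n\alpha(n))$ bound for representing a suitable family of $\Theta(n)$ intervals over a linear order when only \emph{ordered} concatenation is allowed to share subexpressions — equivalently, the lower bound on the size of a circuit computing partial sums of a sequence over a non-commutative (free) semigroup, which is due to Yao and to Tarjan, via the Davenport--Schinzel / union-of-intervals argument that produces the inverse Ackermann function. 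So the main steps are: (i) fix the hard interval family and assemble it into a 0/1 matrix $A$ with exactly two zeroes per row (each row = complement of two columns = a union of at most three intervals, i.e. one "gap" interval removed, arranged so that the induced multiset of required interval-sums is the hard instance); (ii) argue that any semigroup circuit for $Ax$ over a faithful non-commutative $S$ yields, after restriction/relabeling, a circuit for the corresponding ordered interval-sum problem over the free semigroup of no smaller size; (iii) quote the $\Omega(n\alpha(n))$ lower bound for the latter.

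The hard part will be step (ii): making precise how "faithfulness" forces a semigroup circuit to behave like a free-semigroup circuit, i.e. that one cannot exploit accidental identities $u\circ v = v\circ u$ or other relations in $S$ to collapse the ordered structure and beat $n\alpha(n)$. I expect the definition of faithful non-commutative semigroup to be crafted exactly so that for the particular variable sequences arising from $A$, any two distinct orderings evaluate to distinct elements on a suitable assignment, so that an indistinguishability / adversary argument pins each gate to a well-defined contiguous ordered subword and the counting argument of the non-commutative lower bound goes through verbatim. A secondary technical point is ensuring the matrix with exactly two zeroes per row can realize the full hard family (and that rounding/padding to square $n\times n$ costs only $O(n)$ gates), but that is a routine packing argument.
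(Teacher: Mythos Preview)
Your upper bound is exactly the paper's argument: split each row into $O(1)$ ranges and invoke Yao's $O(n\alpha(n))$ range-query construction, which uses only associativity.

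For the lower bound your outline is in the right spirit but has a real gap at step~(ii), and your guess about what ``faithful non-commutative'' means is off in a way that matters.

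First, the paper's definition (Definitions~\ref{def:strong_non_commutativity} and~\ref{def:faithful}) is \emph{not} that distinct orderings of distinct variables give distinct elements. It is a recursive condition on \emph{initials} and \emph{terminals} of words, crafted precisely so that free \emph{idempotent} semigroups are included. In such semigroups different sequences can and do collapse (e.g.\ $xyxy=xy$), and, as the paper stresses, idempotency lets a circuit do genuinely surprising things: one can manufacture $yx$ from $xy$ via $y\cdot(xy)\cdot x=yx$, or insert a variable into the middle of an already-computed product. So an intermediate gate need not compute anything equivalent to a contiguous ordered subword, and there is no adversary/indistinguishability argument that ``pins'' gates to ranges for free.

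Consequently step~(ii) is where all the work lies, and the paper does not get it by a one-line faithfulness argument. It first passes to the idempotent quotient $X_{id}$ (Lemma~\ref{lem:idempotisation}), then proves two circuit-transformation lemmas: Lemma~\ref{lemma:correctorder} shows that any circuit for $Ax$ can be rebuilt, gate by gate, into one where every gate computes a word equivalent to an \emph{increasing} word (this uses the initials/terminals machinery via Lemmas~\ref{lem:prefix_equivalence}--\ref{lem:variables_order} and a notion of ``good'' variables, and is where the idempotency tricks are neutralised); Lemma~\ref{lemma:matrixranges} then turns an increasing circuit into a range circuit of the same size. Only after this does one get a range-query circuit, and the paper finishes by passing to the \emph{commutative} quotient $X_{sym}$ and invoking the Chazelle--Rosenberg $\Omega(n\alpha(n))$ bound for commutative faithful semigroups --- not a free-semigroup/non-commutative bound. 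Your packing of queries into a matrix with two zeroes per row matches the paper's; what you are missing is the circuit-normalisation argument that makes that packing usable.
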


\subsection{Motivation}
The complexity of linear operators is interesting for many reasons.
\begin{description}
\item[Range queries.] In the \emph{range queries} problem,
one is given a~vector~$x=(x_1, \dotsc, x_n)$ over a~semigroup $(S, \circ)$ and
multiple queries of the form~$(l,r)$, and is required to
output the result $x_l \circ x_{l+1} \circ \dotsb \circ x_r$
for each query. It~is a~classical problem in data structures and
algorithms with applications in many fields, such as bioinformatics and
string algorithms, computational geometry, image analysis, real-time
systems, and others. We review some of the less straightforward applications in Section~\ref{subseq:rmqapp},
as well as a~rich variety of algorithmic techniques for the problem in
Section~\ref{subsec:approaches}.

The linear operator problem is a~natural generalization of the range queries
problem: each row of the matrix~$A$ defines a~subset of the elements of~$x$
that need to be summed up and this subset is not required to be
a~contiguous range. The algorithms (Theorem~\ref{thm:upperbound} and
Corollary~\ref{cor:matrixmultiplication}) and hardness results
(Theorem~\ref{thm:lowerbound}) for the linear operator problem presented in this
paper are indeed inspired by some of the known results for the range queries
problem.

\item[Graph algorithms.] Various graph path/reachability
problems can be reduced naturally to matrix multiplication.
Two classic examples are: (i) the all-pairs shortest path problem (APSP) is
reducible to min-plus matrix multiplication, and (ii) the number of triangles
in an undirected graph can be found by computing the third power of its
adjacency matrix.
It is natural to ask what happens if
a~graph has $O(n)$ edges or $O(n)$ anti-edges
(as usual, by~$n$ we denote the number of nodes).
In many cases, an efficient algorithm
for sparse graphs ($O(n)$ edges) is straightforward
whereas an algorithm with the same efficiency
for complements of sparse graphs ($O(n)$ anti-edges) is not. For
example, it is easy to solve APSP and triangle counting on sparse graphs in
time $O(n^2)$, but achieving the same time complexity for complements of sparse
graphs is more complicated.
Theorem~\ref{thm:upperbound} and Corollary~\ref{cor:matrixmultiplication} can be
used to solve various problems on complements of sparse graphs in time $O(n^2)$.

\item[Matrix multiplication over semirings.] Fast matrix
multiplication methods rely essentially on the ring structure of the underlying
set of elements. The first such algorithm was given by~Strassen,
the current record upper bound is $O(n^{2.373})$~\cite{DBLP:conf/stoc/Williams12, DBLP:conf/issac/Gall14a}.
The removal of the inverse operation often drastically increases the complexity
of algorithmic problems over algebraic structures, and even the complexity of
standard computational tasks are not well understood over tropical and
Boolean semirings (see, e.g.~\cite{Williams14,GrigorievP15}).
For various important semirings,
we still do not know an $n^{3-\varepsilon}$ (for a~constant~$\varepsilon>0$) upper
bound for matrix multiplication, e.g., the strongest known upper bound for
min-plus matrix multiplication is $n^3/\exp(\sqrt{\log n})$~\cite{Williams14}.

The interest in computations over such algebraic structures has
recently grew substantially throughout the
Computer Science community with the cases of Boolean and
tropical semirings being of the main interest (see, for
example,~\cite{Jukna16,Williams14,butkovic10systems}).
From this perspective, the computation complexity over sparse and complements of
sparse matrices is one of the most basic questions.
Theorem~\ref{thm:upperbound} and Corollary~\ref{cor:matrixmultiplication}
therefore characterise natural special
cases when efficient computations are possible.

\item[Functional programming.]
Algebraic data structures for graphs developed in the functional programming
community~\cite{mokhov2017algebraic} can be used for representing and processing
densely-connected graphs in linear (in the number of vertices) time and memory.
As we discuss in Section~\ref{sec-dense-graph}, Theorem~\ref{thm:upperbound}
yields an algorithm for deriving a~linear-size algebraic graph representation
for complements of sparse graphs.

\item[Circuit complexity.] Computing linear operators over
a~Boolean semiring~$(\{0,1\}, \lor)$ is a~well-studied problem
in circuit complexity. The corresponding computational model is known
as~\emph{rectifier networks}. An overview of known lower and upper bounds for
such circuits is given by Jukna~\cite[Section~13.6]{DBLP:books/daglib/0028687}.
Theorem~\ref{thm:upperbound} states that very dense linear operators have
linear rectifier network complexity.
\end{description}

\section{Background}
\subsection{Semigroups and Semirings}
A~\emph{semigroup} $(S, \circ)$ is an algebraic structure, where
the operation
$\circ$~is~\emph{closed}, i.e., $\circ : S\times S \rightarrow S$,
and
\emph{associative}, i.e.,
$x \circ (y \circ z) = (x \circ y) \circ z$ for all~$x$, $y$,~and~$z$
in~$S$.
\emph{Commutative} (or \emph{abelian}) semigroups introduce
one extra requirement: $x \circ y = y \circ x$ for all $x$ and $y$
in~$S$.

A~commutative semigroup $(S, \circ)$ can often be extended to
a~\emph{semiring} $(S, \circ, \bullet)$ by introducing
another associative (but not necessarily
commutative)
operation $\bullet$ that \emph{distributes} over~$\circ$, that is
\[
x \bullet (y \circ z) = (x \bullet y) \circ (x \bullet z)\\
\]
\[
(x \circ y) \bullet z = (x \bullet z) \circ (y \bullet z)
\]
hold for all~$x$, $y$,~and~$z$ in~$S$.
Since $\circ$~and~$\bullet$ behave
similarly to numeric addition and multiplication, it is common to
give~$\bullet$ a~higher precedence to avoid
unnecessary parentheses, and even omit~$\bullet$~from
formulas altogether, replacing it by juxtaposition.
This gives a terser and
more convenient notation, e.g., the left distributivity law becomes:
$x (y \circ z) = x y \circ x z$. We will use this notation,
insofar as this does not lead to ambiguity. See Subsection~
\ref{subsec:algstr} for an overview of commonly used
semigroups and semirings.

\subsection{Range Queries Problem and Linear Operator Problem}
In the {\em range queries problem}, one is given
a~sequence $x_1, x_2, \dotsc, x_n$ of
elements of a~fixed semigroup $(S, \circ)$.
Then, a~\emph{range query} is
specified by a~pair of indices $(l,r)$, such that $1 \le l \le r \le n$.
The answer to such a~query is the result of applying the semigroup
operation to the
corresponding range, i.e., $x_l \circ x_{l+1} \circ \dotsb \circ x_r$.
The range queries problem is then to simply answer all given range
queries.
There are two
regimes: online and offline. In the {\em online regime}, one is given
a~sequence of {\em values}
$x_1=v_1, x_2=v_2, \dotsc, x_n=v_n$ and is asked to preprocess
it so that to
answer efficiently any subsequent query.
By ``efficiently'' one usually
means in time independent of the length of the range
(i.e., $r-l+1$, the time
of a~naive algorithm), say, in time $O(\log n)$ or $O(1)$.
In this paper, we
focus on the {\em offline} version, where one is given a~sequence
together with
all the queries, and are interested in the minimum number of
semigroup
operations needed to answer all the queries. Moreover, we study
a~more general
problem: we assume that $x_1, \dotsc, x_n$ are formal variables
rather than
actual semigroup values. That is, we study the {\em circuit size} of
the corresponding
computational problem.

The {\em linear operator} problem generalizes the range
queries problem: now, instead of contiguous ranges one wants
to compute sums over arbitrary subsets. These subsets are
given as rows of a~0/1-matrix~$A$.

\subsection{Circuits}\label{subsec:circuits}
We assume that the input consists of $n$~formal variables
$\{x_1, \dotsc, x_n\}$. We are interested in the minimum number of semigroup
operations needed to compute all given words $\{w_1, \dotsc, w_m\}$ (e.g., for
the range queries problem, each word has a~form $x_l\circ x_{l+1}\circ \dotsb \circ x_r$). We use
the following natural {\em circuit} model. A~circuit computing all these queries
is a~directed acyclic graph. There are exactly $n$~nodes of zero in-degree. They
are labelled with $\{1, \dotsc, n\}$ and are called {\em input gates}. All
other nodes have positive in-degree and are called {\em gates}. Finally, some
$m$~gates have out-degree~0 and are labelled with $\{1, \dotsc, m\}$; they are called {\em output gates}. The
{\em size} of a~circuit is its number of edges (also called {\em wires}). Each
gate of a~circuit computes a~word defined in a~natural way: input gates compute
just $\{x_1, \dotsc, x_n\}$; any other gate of in-degree~$r$ computes a~word
$f_1 \circ f_2 \circ \dotsb \circ f_r$ where $\{f_1, \dotsc, f_r\}$ are words
computed at its predecessors (therefore, we assume that there is an underlying
order on the incoming wires for each gate). We say that the circuit computes the
words $\{w_1, \dotsc, w_m\}$ if the words computed at the output gates are
equivalent to $\{w_1, \dotsc, w_m\}$ over the considered semigroup.

For example, the following circuit computes range queries
$(l_1,r_1)=(1,4)$,
$(l_2,r_2)=(2,5)$, and
$(l_3,r_3)=(4,5)$
over inputs $\{x_1, \dotsc, x_5\}$ or, equivalently, the
linear operator $Ax$ where the matrix $A$~is given below.

\begin{center}
\begin{tikzpicture}
\foreach \x/\y/\n/\t in {0/4/x1/1, 1/4/x2/2, 2/4/x3/3, 3/4/x4/4, 4/4/x5/5, 2/3/a/~, 1/2/b/1, 3/2/c/2, 4/2/d/3}
  \node[inner sep=0mm,circle,draw,minimum size=6mm] (\n) at (\x,\y) {$\t$};
\foreach \s/\t in {x2/a, x3/a, x4/a, x1/b, a/b, x5/c, a/c, x4/d, x5/d}
  \draw[->] (\s) -- (\t);

\node at (8,3) {$A=\begin{pmatrix}1&1&1&1&0\\0&1&1&1&1\\0&0&0&1&1\end{pmatrix}$};
\end{tikzpicture}
\end{center}

For a~0/1-matrix~$A$, by $C(A)$ we denote the minimum size of
a~circuit computing the linear operator $Ax$.

A~{\em binary circuit} is a~circuit having no gates of fan-in more than two. It
is not difficult to see that any circuit can be converted into a~binary circuit
of size at most twice the size of the original circuit. For this, one just
replaces every gate of fan-in~$k$, for $k>2$, by a~binary tree with $2k-2$ wires
(such a~tree contains $k$~leaves hence $k-1$ inner nodes and $2k-2$ edges).
In~the binary circuit the number of gates does not exceed its size
(i.e., the number of wires). And the number of gates in a~binary
circuit is
exactly the minimum number of semigroup operations needed to
compute the
corresponding function.

We call a~circuit~$C$ computing $A$ \emph{regular} if for every pair $(i,j)$ such that $A_{ij}=1$, there exists exactly one path from the input~$j$ to the output~$i$. A~convenient property of regular circuits is the following observation.

\begin{observation}\label{obs:transpose}
Let $C$~be a~regular circuit computing a~0/1-matrix~$A$ over a~commutative semigroup. Then, by reversing all the wires in~$C$ one gets a~circuit computing~$A^T$.
\end{observation}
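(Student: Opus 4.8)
The plan is to reduce the statement to a counting fact about directed paths. For a gate $g$ of a circuit over a commutative semigroup and an input index $k$, let $N(k,g)$ denote the number of directed paths from the input gate $k$ to $g$. I would first prove, by induction along a topological order of the circuit's underlying DAG, that $g$ computes the word in which each $x_k$ appears exactly $N(k,g)$ times (the order of summation being irrelevant by commutativity and associativity). The base case is an input gate $k$, which computes $x_k$ and satisfies $N(k,k)=1$ and $N(k',k)=0$ for $k'\ne k$. For the inductive step, if $g$ has incoming wires from $h_1,\dots,h_r$, then $g$ computes the $\circ$-sum of the words computed at $h_1,\dots,h_r$, so by the induction hypothesis each $x_k$ appears $\sum_{t=1}^{r} N(k,h_t)$ times; and this equals $N(k,g)$ because every directed path from $k$ to $g$ traverses exactly one of the wires $(h_t,g)$ as its final edge.

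Next I would analyse the effect of reversing all wires. Let $C'$ be the circuit obtained from $C$ by reversing every wire. The zero-in-degree nodes of $C'$ are precisely the former output gates of $C$, so $C'$ has $m$ inputs; the zero-out-degree nodes are the former input gates of $C$, so $C'$ has $n$ outputs; every remaining gate still has positive in-degree, so $C'$ is a legitimate circuit computing some operator from $m$ inputs to $n$ outputs. Reversal is a bijection between directed paths in $C$ from input $j$ to output $i$ and directed paths in $C'$ from input $i$ to output $j$. Hence $N^{C'}(i, g_j) = N^{C}(j, g_i)$, where $g_i$ denotes the $i$-th output gate of $C$ (an input gate of $C'$) and $g_j$ denotes the $j$-th input gate of $C$ (an output gate of $C'$).

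Finally I would combine the two ingredients. Because $C$ is regular and computes $A$, the number of directed paths from input $j$ to output $i$ in $C$ is exactly $A_{ij}$: it is $1$ whenever $A_{ij}=1$ by the definition of regularity, and it must be $0$ whenever $A_{ij}=0$, since any such path would force $x_j$ to occur in the word computed at output $i$, whereas that word equals $\sum_{j'\colon A_{ij'}=1} x_{j'}$. Substituting, $N^{C'}(i, g_j) = A_{ij} = (A^{T})_{ji}$ for all $i,j$, so by the path-counting fact applied to $C'$ its $j$-th output computes $\sum_{i\colon (A^{T})_{ji}=1} x_i$; that is, $C'$ computes $A^{T}$.

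The only delicate points are the path-counting fact itself — where commutativity is exactly what licenses reordering the summands, so the argument genuinely fails without it — and keeping careful track of the swapped roles of gates under reversal (inputs $\leftrightarrow$ outputs, fan-in $\leftrightarrow$ fan-out). I expect the write-up to be short, with the bulk of the care going into stating the path-counting lemma cleanly.
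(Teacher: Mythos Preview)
Your path-counting argument is the natural approach and is essentially what one would write to make the observation rigorous; the paper itself gives no formal proof (only an illustrative example), so your write-up goes further than the paper does.

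There is, however, one genuine gap. You claim that the number of paths from input $j$ to output $i$ must be $0$ whenever $A_{ij}=0$, on the grounds that otherwise $x_j$ would occur in the word at output $i$, ``whereas that word equals $\sum_{j'\colon A_{ij'}=1}x_{j'}$.'' But that word is only \emph{equivalent over $S$} to the target sum, not syntactically equal, and over a general commutative semigroup two equivalent words need not have the same variable set. Concretely, take $S=\{0,1\}$ with $a\circ b=0$ for all $a,b$; then any two words of length at least $2$ are equivalent in $X_{S}$. For $A=\left(\begin{smallmatrix}1&1&1\\1&1&0\\1&1&0\end{smallmatrix}\right)$ one can build a regular circuit over $S$ that routes both outputs $1$ and $2$ through a common gate depending on $x_3$; this circuit is regular and computes $A$, yet its reversal produces $y_1\circ y_2$ at the third output instead of the required $y_1$, so it does \emph{not} compute $A^{T}$. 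Thus the step fails as written, and in fact the observation itself needs a mild extra hypothesis: either assume $S$ is faithful in the sense that $W\sim W'$ forces $\var(W)=\var(W')$ (then your argument goes through verbatim), or strengthen ``regular'' to also forbid paths from input $j$ to output $i$ when $A_{ij}=0$. Either fix is harmless for the paper's actual use of the observation, since the explicit range-query circuits it constructs already satisfy the stronger regularity condition.
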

Instead of giving a~formal proof, we provide an example of a~reversed circuit from the example given above. It is because of this observation that we require circuit outputs to be gates of out-degree zero (so that when reversing all the wires the inputs and the outputs exchange places).

\begin{center}
\begin{tikzpicture}
\foreach \x/\y/\n/\t in {0/4/x1/1, 1/4/x2/2, 2/4/x3/3, 3/4/x4/4, 4/4/x5/5, 2/3/a/~, 1/2/b/1, 3/2/c/2, 4/2/d/3}
  \node[inner sep=0mm,circle,draw,minimum size=6mm] (\n) at (\x,\y) {$\t$};
\foreach \s/\t in {x2/a, x3/a, x4/a, x1/b, a/b, x5/c, a/c, x4/d, x5/d}
  \draw[<-] (\s) -- (\t);

\node at (8,3) {$A^T=\begin{pmatrix}1&0&0\\1&1&0\\1&1&0\\1&1&1\\0&1&1\end{pmatrix}$};
\end{tikzpicture}
\end{center}

\section{Commutative Case}\label{sec-commutative}
This section is devoted to the proofs of Theorem~\ref{thm:upperbound} and Corollary~\ref{cor:matrixmultiplication}
which we remind below.

\upperthm*

\matrixmultcor*

We start by proving two simpler statements to show
how commutativity is important.

\begin{lemma}\label{lemma:easy}
Let $S$~be a semigroup (not necessarily commutative) and let
$A \in \{0,1\}^{n \times n}$ contain at most
one zero in every row. Then
$C(A) = O(n)$.
\end{lemma}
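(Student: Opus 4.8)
The plan is to exploit the fact that each row of $A$ is either all-ones or missing exactly one entry, so the outputs are ``prefix/suffix-like'' sums that can be shared. Let me think about how to organize the sharing.

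Suppose row $i$ has its single zero in column $c_i$ (if the row is all-ones, treat it as a trivial case or assign $c_i$ arbitrarily, say $c_i = n+1$, using the neutral-element convention from the footnote, or just handle all-ones rows separately with a single shared full sum $x_1 \circ \cdots \circ x_n$). Then output $i$ must compute $\bigcirc_{j \ne c_i} x_j$. The key trick: $\bigcirc_{j \ne c} x_j = (x_1 \circ \cdots \circ x_{c-1}) \circ (x_{c+1} \circ \cdots \circ x_n)$, i.e. a prefix up to $c-1$ composed with a suffix from $c+1$. So I would build a prefix chain $P_k = x_1 \circ \cdots \circ x_k$ for $k = 1, \ldots, n$ using $n-1$ operations (each $P_k = P_{k-1} \circ x_k$), and a suffix chain $Sf_k = x_k \circ \cdots \circ x_n$ for $k = 1, \ldots, n$ using another $n-1$ operations. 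Then each output is $P_{c_i - 1} \circ Sf_{c_i + 1}$, one operation per output, with the edge cases $c_i = 1$ (output is just $Sf_2$) and $c_i = n$ (output is just $P_{n-1}$) requiring no operation at all. All-ones rows are $P_n = Sf_1$, already computed. Total: $2(n-1) + n = O(n)$ operations, hence $O(n)$ wires in the binary circuit, so $C(A) = O(n)$.

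The main steps in order: (1) reduce to the case where every row has exactly one zero, pushing all-ones rows to the single shared full-sum gate; (2) fix for each row the column $c_i$ of its zero; (3) construct the prefix gates $P_1, \ldots, P_n$ and suffix gates $Sf_1, \ldots, Sf_n$, noting this is $O(n)$ wires; (4) wire each output gate $i$ to $P_{c_i-1}$ and $Sf_{c_i+1}$ (or directly to one of them in the boundary cases), using $O(n)$ additional wires; (5) verify correctness — each output computes the semigroup product of exactly the variables $x_j$ with $A_{ij} = 1$, using only associativity (commutativity is not even needed here, which matches the lemma hypothesis).

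Honestly there is no real obstacle here; the statement is a warm-up. The only thing to be mildly careful about is the all-zero / all-ones row bookkeeping and the boundary cases $c_i \in \{1, n\}$, and making sure the circuit-size count (wires, not gates) genuinely comes out $O(n)$ — but since we have $O(n)$ gates each of bounded fan-in in the final binary circuit, this is immediate. I would present the construction explicitly and then just observe correctness in one or two sentences rather than writing out an induction.
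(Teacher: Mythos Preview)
Your proposal is correct and is essentially identical to the paper's own proof: the paper also precomputes all prefixes $p_i = x_1 \circ \cdots \circ x_i$ and suffixes $s_j = x_j \circ \cdots \circ x_n$ in $O(n)$ gates, and then outputs $p_{i-1} \circ s_{i+1}$ for a row whose zero is in column~$i$ (with the same boundary handling and the all-ones row answered by $p_n$). There is nothing to add.
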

\begin{proof}
To compute the linear operator $Ax$, we first
precompute all prefixes and suffixes of $x=(x_1, \dotsc, x_n)$.
Concretely, let $p_i=x_1 \circ x_2 \circ \dotsb \circ x_i$. All $p_i$'s can be computed
using $(n-1)$ binary gates as follows:
\[
p_1=x_1, p_2=p_1 \circ x_2, p_3=p_2 \circ x_3, \dotsc, p_i=p_{i-1} \circ x_i, \dotsc, p_n=p_{n-1}\circ x_n.
\]
Similarly, we compute all suffixes
$s_j=x_j \circ x_{j+1} \dotsb \circ x_n$ using
$(n-1)$ binary gates. From these prefixes and suffixes
all outputs can be
computed as follows: if a~row of~$A$ contains no zeroes,
the corresponding
output is~$p_n$; otherwise if a~row contains a~zero at position~$i$, the
output is $p_{i-1} \circ s_{i+1}$ (for $i=1$ and $i=n$, we omit the redundant
term).
\end{proof}

In the rest of the section, we assume that the
underlying semigroup is
commutative. Allowing at most two zeroes per row already leads to a~non-trivial
problem. We give only a~sketch of the solution below, since we will further
prove a~more general result. It is interesting to compare the following lemma
with Theorem~\ref{thm:lowerbound} that states that in the
non-commutative setting matrices with two zeroes per row are already hard.

\begin{lemma} \label{lem:at_most_2}
Let $A \in \{0,1\}^{n \times n}$ contain at most two zeroes in every row. Then
$C(A) = O(n)$.
\end{lemma}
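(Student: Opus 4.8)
The plan is to reduce the two-zeroes-per-row case to the one-zero-per-row case (Lemma~\ref{lemma:easy}) by a divide-and-conquer on the index set, exploiting commutativity so that the two "forbidden" positions in a row can be handled independently. First I would group the rows of $A$ according to the pair $\{a,b\}$ of columns in which they have zeroes (rows with fewer than two zeroes are trivial by Lemma~\ref{lemma:easy}, so assume exactly two). Think of each such pair as an edge $\{a,b\}$ on the vertex set $\{1,\dots,n\}$; the output for a row with zero-pair $\{a,b\}$ is $\bigl(\bigcirc_{j\ne a,b} x_j\bigr)$, i.e. the full sum $x_1\circ\dots\circ x_n$ with $x_a$ and $x_b$ omitted. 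Commutativity is what lets us speak of "the sum over $\{1,\dots,n\}\setminus\{a,b\}$" without worrying about order.

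The key device is a recursive halving of the interval $[1,n]$. Split $[1,n]$ into a left half $L$ and a right half $R$ of size $n/2$ each, and precompute $\Sigma_L=\bigcirc_{j\in L}x_j$ and $\Sigma_R=\bigcirc_{j\in R}x_j$ with $O(n)$ gates. For a row whose two zero-positions $a,b$ both lie in $L$: its output is $\Sigma_R \circ (\text{sum over } L\setminus\{a,b\})$, so it suffices to solve the \emph{same} problem recursively on the sub-instance indexed by $L$ (a matrix with at most two zeroes per row on $n/2$ columns), then combine each recursive output with the single precomputed value $\Sigma_R$ using one extra gate per such row. Symmetrically for rows with both zeroes in $R$. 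The only rows not covered are the "crossing" rows, with one zero $a\in L$ and one zero $b\in R$: for these the output is $\bigl(\text{sum over }L\setminus\{a\}\bigr)\circ\bigl(\text{sum over }R\setminus\{b\}\bigr)$. But "sum over $L$ with one element removed" is exactly a one-zero-per-row instance on $L$, solvable with $O(|L|)=O(n)$ total gates by Lemma~\ref{lemma:easy} (prefixes and suffixes within $L$); likewise within $R$. So every crossing row is one extra $\circ$-gate applied to one left-partial-sum and one right-partial-sum.

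Counting: let $T(n)$ bound the circuit size. The crossing rows and the half-sums $\Sigma_L,\Sigma_R$ cost $O(n)$ (at most $n$ prefix/suffix gates in each half, plus one combining gate per crossing row, and there are at most $\binom{n}{2}$ distinct crossing pairs but at most... ). \emph{Here is the one genuine subtlety:} a priori there can be up to $\binom{n}{2}=\Theta(n^2)$ distinct zero-pairs $\{a,b\}$, so naively we get one output per pair and the bound blows up. The resolution is that $A$ has only $n$ rows, hence at most $n$ distinct outputs to produce; we never build an output for a pair that does not occur as a row. So the number of crossing rows is $\le n$, each costing $O(1)$ extra; the number of fully-left rows plus fully-right rows is also $\le n$, and these feed into the two recursive calls on instances of size $n/2$ that \emph{together} have at most $n$ rows. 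Thus $T(n)\le T(n/2)+T(n/2)+O(n)$ where the two recursive instances share a total row budget of $n$ — more precisely, writing $R(\cdot)$ for the number of rows, $T(n, R)\le O(n) + T(n/2, R_L) + T(n/2, R_R)$ with $R_L+R_R\le R\le n$. Unfolding the recursion over $\log n$ levels, each level contributes $O(n)$ for the prefix/suffix/half-sum bookkeeping (summed over all sub-instances at that level, since the sub-instances at a level partition $[1,n]$) plus $O(R)=O(n)$ for combining gates, giving $T(n)=O(n\log n)$.

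That is off by a $\log n$ factor from the claimed $O(n)$, so the final step is to shave the logarithm. I expect this to be the main obstacle, and the fix is to not recurse blindly: at the top level, contract each half $L,R$ only at the \emph{granularity needed}. Concretely, instead of a balanced binary split use the following. Build the full prefix array $p_i=x_1\circ\dots\circ x_i$ and suffix array $s_i=x_i\circ\dots\circ x_n$ once, $O(n)$ gates total. For a row with zeroes at $a<b$, its output is $p_{a-1}\circ(x_{a+1}\circ\dots\circ x_{b-1})\circ s_{b+1}$; the two outer terms are free, and the middle term is a range query $(a+1,b-1)$. So the whole problem reduces, with $O(n)$ overhead, to \emph{answering $n$ offline range queries} on $x_1,\dots,x_n$ over a commutative semigroup — and the classical result (the commutative offline range-minima / interval-sum bound, the same $O(n)$ technique underlying Yao's and the subsequent work referenced in the paper) gives $O(n)$ gates for $n$ range queries on $n$ elements. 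Invoking that result (or, since the paper says it will "further prove a more general result," simply forward-referencing Theorem~\ref{thm:upperbound}) closes the bound at $C(A)=O(n)$. Since the paper explicitly gives only a sketch here and defers to the general theorem, the cleanest write-up is: reduce to offline range queries via prefixes/suffixes in $O(n)$, then cite the commutative $O(n)$ range-query bound.
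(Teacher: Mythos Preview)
Your first pass (balanced halving of the column index set) is fine and correctly yields $O(n\log n)$, and you correctly diagnose why: a balanced split of the \emph{columns} does not force any constant fraction of the \emph{rows} to become crossing, so the per-level $O(n)$ prefix/suffix work is repeated $\log n$ times.

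The gap is in your proposed fix. Reducing a row with zeroes at $a<b$ to $p_{a-1}\circ(\text{range }[a{+}1,b{-}1])\circ s_{b+1}$ is a valid reduction, but the sentence ``the classical result \dots\ gives $O(n)$ gates for $n$ range queries on $n$ elements'' is false for commutative semigroups. Over any faithful commutative semigroup (in particular the free commutative idempotent one), $n$ offline range queries on $n$ elements require $\Theta(n\alpha(n))$ operations; this is precisely the Chazelle--Rosenberg lower bound that the paper itself invokes in Section~\ref{sec-non-commutative}. So your reduction only delivers $C(A)=O(n\alpha(n))$. (The $O(n)$ bound you may be remembering holds for \emph{groups}, via prefix sums and subtraction, which is exactly the inverse operation we do not have here.) Your fallback of forward-referencing Theorem~\ref{thm:upperbound} is logically allowed but empties the lemma of content, since the lemma is offered as a self-contained warm-up.

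The paper's sketch gets $O(n)$ by a different, and essential, idea: it does not split the columns in half, it splits them by a \emph{max cut} of the graph $G$ on vertex set $\{1,\dots,n\}$ whose edges are the zero-pairs $\{a,b\}$. Since $G$ has at most $n$ edges, some cut $(L,R)$ crosses at least $n/2$ of them, i.e.\ at least half of the rows are crossing and are handled at this level via Lemma~\ref{lemma:easy}. The remaining $\le n/2$ rows (both zeroes on one side) are recursed on after deleting all-one columns; commutativity lets you add back the deleted block sum with one gate per surviving row. Now the number of \emph{rows} halves each level, and the number of surviving columns at level $i$ is at most twice the number of rows at that level, so the per-level cost is geometrically decreasing and the recurrence $T(n)\le cn+T(n/2)$ gives $T(n)=O(n)$. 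The point you were missing is that to beat $n\log n$ you must make the \emph{row} count, not the column count, shrink geometrically, and the max-cut step is what guarantees that.
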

\begin{proof}[Proof sketch]
Consider the following undirected graph: the set of nodes is $\{1,2,\dotsc,n\}$;
two nodes $i$ and $j$ are joined by an edge if there is a~row having zeroes in
columns~$i$ and~$j$. In the worst case (all rows are different and contain
exactly two zeroes), the graph has exactly $n$~edges and hence it contains a cut
$(L,R)$ of size at least $n/2$. This cut splits the columns of the matrix into
two parts ($L$ and $R$). Now let us also split the rows into two parts: the top
part $T$~contains all columns that have exactly one zero in each $L$ and $R$;
the bottom part $B$ contains all the remaining rows. What is nice about the top
part of the matrix ($T \times (L \cup R)$) is that it can be computed by $O(n)$
gates (using Lemma~\ref{lemma:easy}). For the bottom part, let us cut all-1
columns out of it and make a recursive call (note that this requires the
commutativity). The corresponding recurrence relation is $T(n) \le cn + T(n/2)$
for a fixed constant $c$, implying $T(n)=O(n)$, and hence $C(A) = O(n)$.
\end{proof}

We now state a~few auxiliary lemmas that will be
used as building blocks in the proof of Theorem~\ref{thm:upperbound}.


\begin{lemma}\label{lemma:decompose}
There exists a~binary regular circuit of size $O(n\log n)$ such that
any range can be computed in a~single additional binary gate
using two gates of the circuit. It can be generated in time
$O(n\log n)$.
\end{lemma}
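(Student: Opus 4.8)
The plan is to use the classical sparse-table / segment-tree construction for range queries, but arranged so that every range is recovered as the combination of exactly \emph{two} precomputed gates plus one extra operation, and so that the circuit is regular (each input-to-output path is unique). First I would build a balanced binary tree over the leaves $x_1,\dots,x_n$ (assume $n$ a power of $2$ by padding; this only affects constants). For each internal node $v$ whose leaves span the interval $[\ell_v,r_v]$, the circuit will contain a gate computing $x_{\ell_v}\circ\cdots\circ x_{r_v}$, built from the two children of $v$; this costs $O(n)$ gates and wires in total. On top of this, for every node $v$ I would additionally precompute, along the left spine of the subtree rooted at $v$, all the prefixes $x_{\ell_v}\circ\cdots\circ x_j$ for $j$ ranging over the leaves of $v$, and symmetrically all the suffixes $x_i\circ\cdots\circ x_{r_v}$ along the right spine. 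Each such family of prefixes/suffixes for a node at height $h$ has size $2^h$, and summing over the $n/2^h$ nodes at each height gives $O(n)$ per level and $O(n\log n)$ overall, both in gate count and in wire count, and all of it is generable in $O(n\log n)$ time by a straightforward bottom-up pass.

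The key step is the query decomposition. Given a range $[l,r]$, let $v$ be the lowest common ancestor in the tree of the leaves $l$ and $r$; if $l=r$ the range is a single input and needs no gate, and if $v$'s two children are $v_L$ (spanning $[\ell_v,m]$) and $v_R$ (spanning $[m+1,r_v]$), then $l$ lies in $v_L$ and $r$ lies in $v_R$. Now $x_l\circ\cdots\circ x_r = (x_l\circ\cdots\circ x_m)\circ(x_{m+1}\circ\cdots\circ x_r)$, and the first factor is exactly a \emph{suffix} of $v_L$ (it ends at the right end $m$ of $v_L$), hence is one of the precomputed suffix gates at $v_L$; the second factor is a \emph{prefix} of $v_R$, hence a precomputed prefix gate at $v_R$. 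So the range is obtained by one extra binary gate combining these two existing gates, as claimed. (The boundary cases where $l=\ell_v$ or $r=r_v$ only make things easier, as one of the two gates is then the full node gate or can be dropped.)

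It remains to argue regularity: for each pair $(i,j)$ with $A_{ij}=1$ — i.e.\ input $j$ feeding output $i$, where output $i$ corresponds to some query range $[l,r]\ni j$ — there is exactly one path from input $j$ to that output gate. Inside the tree of full-node gates and inside each spine of prefix/suffix gates, each input reaches each gate along a unique path by construction (these are trees). The extra query gate for $[l,r]$ has exactly two incoming wires, one from the suffix gate of $v_L$ and one from the prefix gate of $v_R$; since the leaves of $v_L$ and $v_R$ are disjoint, input $j$ feeds exactly one of these two gates, so there is a unique path from $j$ to the query gate. The main obstacle I anticipate is purely bookkeeping: making the prefix/suffix families share structure enough to keep the wire count at $O(n\log n)$ rather than $O(n\log^2 n)$ — this is handled by computing each node's prefixes incrementally from the previous one ($p_{k+1}=p_k\circ x_{k+1}$) using a single new wire per new leaf, and likewise for suffixes — and being careful that this incremental chaining does not destroy regularity, which it does not because each chain is again a simple path. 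Putting the pieces together yields a binary regular circuit of size $O(n\log n)$, generable in time $O(n\log n)$, in which any range is computed by one additional binary gate from two of its gates.
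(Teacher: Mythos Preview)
Your proposal is correct and is essentially the same construction as the paper's proof, which cites the Alon--Schieber divide-and-conquer scheme: split the range in half, precompute all suffixes of the left half and all prefixes of the right half, and recurse. Your binary-tree/LCA phrasing is an explicit unrolling of that recursion; you precompute slightly more than necessary (both prefixes and suffixes at \emph{every} node, rather than only suffixes at left children and prefixes at right children), but this is a constant-factor redundancy and the underlying idea, the size bound, and the regularity argument are identical.
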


\begin{lemma}\label{lemma:blocks}
There exists a~binary regular circuit of size $O(n)$ such
that any range
of length at least $\log n$ can be computed in two binary
additional gates from the gates of the circuit.
It can be generated by an algorithm in time $O(n)$.
\end{lemma}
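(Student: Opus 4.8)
The plan is to build on Lemma \ref{lemma:decompose} by a standard two-level blocking trick, so that the expensive $O(n\log n)$ construction is applied only to a coarsened instance of size $O(n/\log n)$, while the fine-grained structure inside each block is handled by a cheap linear-size gadget.

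First I would partition the index set $\{1,\dots,n\}$ into consecutive blocks $B_1,\dots,B_k$ of length $b=\lceil\log n\rceil$ each (so $k = \lceil n/b\rceil = O(n/\log n)$). For each block $B_t$ I precompute all prefixes and all suffixes of the block internally: $\mathrm{pre}_{t,i} = x_{(t-1)b+1}\circ\cdots\circ x_{(t-1)b+i}$ and symmetrically $\mathrm{suf}_{t,i}$, exactly as in the proof of Lemma \ref{lemma:easy}. Each block costs $O(b)$ gates, for a total of $O(kb)=O(n)$; note the last block prefix $\mathrm{pre}_{t,b}$ is the full block-sum $y_t := \bigcirc_{j\in B_t} x_j$. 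This part of the circuit is regular: each leaf $x_j$ reaches $\mathrm{pre}_{t,i}$ for $i\ge$ its in-block position and $\mathrm{suf}_{t,i}$ for the complementary positions, via unique paths.

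Next I would feed the $k$ block-sums $y_1,\dots,y_k$ into the circuit of Lemma \ref{lemma:decompose}, built on $k$ inputs; this has size $O(k\log k) = O\!\big(\tfrac{n}{\log n}\cdot\log n\big) = O(n)$, and by that lemma any contiguous range of blocks $y_s\circ y_{s+1}\circ\cdots\circ y_{t}$ is obtainable from two of its gates via one extra binary gate. Now consider an arbitrary query range $[l,r]$ of length at least $b=\log n$. Let $l$ lie in block $B_p$ (at in-block position $a$) and $r$ in block $B_q$ (at in-block position $c$). If $p=q$ the range has length $\le b$, excluded by hypothesis, so $p<q$. Then
\[
x_l\circ\cdots\circ x_r \;=\; \mathrm{suf}_{p,a}\;\circ\;\big(y_{p+1}\circ\cdots\circ y_{q-1}\big)\;\circ\;\mathrm{pre}_{q,c},
\]
where the middle factor is a (possibly empty, if $q=p+1$) contiguous block-range available from the Lemma \ref{lemma:decompose} sub-circuit. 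The first and third factors are already-computed gates. The middle factor, when nonempty, is one extra binary gate on top of the sub-circuit; combining it with $\mathrm{suf}_{p,a}$ is a second extra gate, and combining with $\mathrm{pre}_{q,c}$ would be a third — so I need to be slightly careful to stay within the promised \emph{two} additional gates.

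The main obstacle is exactly this gate-count bookkeeping: a naive association uses three extra gates. The fix is to precompute, for every block $B_t$, the combined quantities $\mathrm{suf}_{t,i}\circ y_{t+1}$ for all $i$ (a suffix of block $t$ glued to the next block-sum) and $y_{t-1}\circ\mathrm{pre}_{t,i}$ for all $i$; each such family costs $O(b)$ gates per block, $O(n)$ overall, and preserves regularity. With these in hand, $x_l\circ\cdots\circ x_r = \big(\mathrm{suf}_{p,a}\circ y_{p+1}\big)\circ R \circ \big(y_{q-1}\circ\mathrm{pre}_{q,c}\big)$ where $R=y_{p+2}\circ\cdots\circ y_{q-2}$ is again a contiguous block-range from the sub-circuit (or empty); now it is one extra gate to form $(\mathrm{suf}_{p,a}\circ y_{p+1})\circ R$ — in fact $R$ together with the boundary terms can be absorbed by applying Lemma \ref{lemma:decompose} to the extended range $[p,q]$ with the two boundary blocks replaced appropriately — and one more to append the right boundary term, giving two additional binary gates total. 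Handling the small-$p,q$-gap corner cases ($q\in\{p+1,p+2\}$) separately finishes the argument. The whole circuit has size $O(n)$, each building block is regular, and since Lemma \ref{lemma:decompose}'s circuit is generated in time $O(k\log k)=O(n)$ and everything else is explicit, the generation algorithm runs in $O(n)$ time.
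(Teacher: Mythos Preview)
Your construction is essentially the paper's own: partition $\{1,\dots,n\}$ into blocks of length $\log n$, compute all prefixes and suffixes inside each block, and apply Lemma~\ref{lemma:decompose} to the $n/\log n$ block-sums. The paper then simply observes that any range of length at least $\log n$ spans at least two blocks and therefore decomposes as (suffix of a block) $\circ$ (range of blocks) $\circ$ (prefix of a block), all three of which are available from the circuit; that is the entire argument.

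Your additional layer---precomputing $\mathrm{suf}_{t,i}\circ y_{t+1}$ and $y_{t-1}\circ\mathrm{pre}_{t,i}$ to shave the per-query cost from three to two extra gates---is unnecessary and, as written, does not achieve its aim: the middle factor $R=y_{p+2}\circ\cdots\circ y_{q-2}$ still costs one extra gate by Lemma~\ref{lemma:decompose}, so your count is again $1+1+1=3$, and the parenthetical about ``absorbing $R$ into the extended range $[p,q]$'' is not an argument. In fact the paper's own proof, read literally, also gives three rather than two extra gates (one for the block-range via Lemma~\ref{lemma:decompose}, two to merge three pieces); the only downstream use of this lemma (inside Lemma~\ref{lemma:permute}) needs only $O(1)$ extra gates per query, so the discrepancy is harmless. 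Drop the extra gadgetry and your proof matches the paper's.
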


\begin{lemma}\label{lemma:permute}
Let $m \le n$ and $A \in \{0,1\}^{m \times n}$
be a~matrix with $z=\Omega(n)$ zeroes and at most $\log n$ zeroes in every row. There exists a~circuit of size $O(z)$ computing $Ax$. Moreover, there exists a~randomized $O(z)$ time algorithm that takes as input the positions of $z$~zeros
and outputs a~circuit computing $Ax$ with probability at least $1-\frac{O(\log^5n)}{n}$. There also exists a~deterministic
algorithm with running time $O(n\log^4n)$.
\end{lemma}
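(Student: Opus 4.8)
The plan is to reduce the general ``$\le \log n$ zeroes per row'' case to the ``$\le 2$ zeroes per row'' phenomenon exploited in Lemma~\ref{lem:at_most_2}, but to do it in one shot rather than recursively, using a careful partition of the columns so that on each row the few zeroes get spread out into separate blocks. Concretely, I would first build the linear-size auxiliary circuit of Lemma~\ref{lemma:blocks}: after paying $O(n)$ gates once, any contiguous range of length $\ge \log n$ is available in two extra binary gates. The goal is then to express each output row as a short $\circ$-combination of such long ranges: if we can partition $\{1,\dots,n\}$ into $O(1)$ many contiguous blocks such that every row, restricted to a block, is either all-ones or misses the block entirely from the ``almost-all-ones'' structure — i.e.\ the zeroes of that row lie in a bounded number of blocks — then each row is a sum of $O(1)$ ``block minus a few points'' pieces. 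Here is where commutativity enters: a block-range with $k$ missing entries on a given row cannot in general be repaired by prefixes/suffixes (that only handles one hole, as in Lemma~\ref{lemma:easy}), so instead I would, for each row, write its value as (sum over the relevant blocks of the full block value) with the $\le \log n$ offending $x_j$'s excised. Excising points from an already-computed sum is not possible without inverses, so the real move is the reverse: for each block, and each subset of positions within the block that appears as ``the zeroes of some row inside this block,'' precompute the partial sum of that block avoiding those positions — but this must be done globally and cheaply, which is exactly the content one has to engineer.

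The key structural step, therefore, is a \emph{bucketing / recursive blocking} argument: sort the columns (this is the ``permute'' in the name — we may permute columns freely by commutativity, applying the permutation to the circuit's inputs at the end) so that rows with nearby zero-patterns are grouped, then split the $n$ columns into $n/\log n$ consecutive blocks of size $\log n$. Within each block, a row has at most $\log n$ zeroes total across \emph{all} blocks, so by an averaging/counting argument only $O(z/\log n)$ (row, block) pairs are ``dirty'' (contain at least one zero of that row); the remaining (row, block) pairs contribute a full block-sum of length $\log n$, retrievable in $2$ gates via Lemma~\ref{lemma:blocks}, and there are at most $O(m \cdot n/\log n)$ such clean pairs — which is too many unless we are cleverer. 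To fix the count I would instead combine blocks hierarchically (a segment-tree-like decomposition over the blocks), so that the all-ones portion of each row is covered by $O(\log n)$ canonical block-unions, giving $O(m\log n)$ clean contributions; since $m \le n$ and $z = \Omega(n)$, this is $O(n\log n)$, still a $\log n$ factor off. The genuine obstacle — and the step I expect to be hardest — is squeezing this down to $O(z)$: this requires that rows be \emph{shared}, i.e.\ that many rows have zero-sets that differ only slightly, so that one amortizes the $O(\log n)$ per-row overhead across $\Omega(\log n)$ similar rows, or else that the blocking be data-dependent (chosen from the actual zero positions, à la the $\Theta(n\alpha(n))$ / interval-union machinery) so that the number of canonical pieces is $O(z)$ rather than $O(n\log n)$. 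I would resolve this by a weight-balanced partition of the $n$ columns into $O(z/n \cdot \text{something})$... more precisely, into consecutive intervals chosen so that each interval contains $\Theta(\text{total zeroes in it})$ worth of structure, making the number of (row, interval) dirty pairs $O(z)$ and the number of clean pairs also $O(z)$ by charging each clean interval-piece of a row to a neighbouring zero.

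For the dirty pairs — a (row, block) pair where the row has $t \ge 1$ of its zeroes inside that block — I compute its contribution directly: it is a sum of at most $\log n$ runs of consecutive present columns inside a $\log n$-sized block, costing $O(\log n)$ gates, but summed over all dirty pairs this is $O(\log n)$ times (number of dirty pairs). Using the balanced blocking so that the number of dirty pairs is $O(z/\log n)$ — each block of ``zero-weight $\Theta(\log n)$'' can be dirty for at most $O(\log n)$ distinct reasons but we charge $O(1)$ amortized — brings the dirty cost to $O(z)$. Finally I assemble each output row by $\circ$-ing together its clean block-union values and its dirty block contributions, $O(1)$ or $O(\log n)$ pieces per row after the hierarchical merge, paying $O(z)$ total; regularity of the Lemma~\ref{lemma:blocks} circuit and the fact that we only add trees of $\circ$-gates keeps the final circuit regular and of size $O(z)$. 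For the algorithmic claims, the randomized $O(z)$-time construction comes from using hashing to group identical or near-identical zero-patterns and to locate, in expected $O(1)$ time each, the $O(z)$ relevant runs; the failure probability $O(\log^5 n)/n$ tracks the union bound over the $O(n)$ hash lookups each failing with probability $O(\log^5 n)/n^2$ after choosing table sizes polynomial in $\log n$. The deterministic $O(n\log^4 n)$ bound replaces hashing by sorting the zero positions ($O(z\log z)$) plus $O(\log^4 n)$ overhead per column for deterministic dictionary construction. The main obstacle, to repeat, is the combinatorial heart: proving that a single (permuted, balanced) blocking decomposes \emph{every} row into $O(1)$ amortized clean pieces plus dirty pieces of total size $O(z)$, with no recursion — this is where the structure of having few zeroes per row, rather than few zeroes overall, must be used, and it is the step on which the whole $O(z)$ bound rests.
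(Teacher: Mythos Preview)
Your proposal has a genuine gap: the blocking scheme you describe does not reach $O(z)$, and the fixes you sketch (``weight-balanced partition,'' ``charge each clean piece to a neighbouring zero,'' ``hash near-identical zero-patterns'') are not made precise enough to close the $\log n$ factor you yourself identify. Concretely, with $n/\log n$ fixed blocks of width $\log n$, a row with $t\le\log n$ zeroes has up to $t+1$ maximal runs of clean blocks, so summing over $m\le n$ rows already costs $\Theta(z)$ \emph{range-of-blocks} queries, each of which is $O(1)$---fine so far---but the dirty (row, block) pairs, of which there are up to $z$, each contribute a within-block pattern that is a union of up to $\log n$ runs of total length up to $\log n$, and you give no mechanism for computing all of these in $O(z)$ gates rather than $O(z\log n)$. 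Grouping rows by ``nearby zero-patterns'' does not help in the worst case (take $m=n$ rows, each with a single zero in a distinct column: all patterns are pairwise far apart), and a data-dependent blocking that simultaneously works for all rows is exactly the hard combinatorial statement you would need to prove but do not.

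The paper's idea is both simpler and different in kind: it permutes the columns \emph{randomly} (this is where the randomness enters, not in hashing), and then each row's zeroes chop it into at most $t+1\le\log n+1$ contiguous ranges. The only ranges that are expensive are the \emph{short} ones (length $<\log n$), because long ranges cost $O(1)$ each via Lemma~\ref{lemma:blocks} and there are at most $z+m=O(z)$ ranges in total. A direct calculation shows that under a uniform random column permutation, for a fixed cell to land in a short range, a zero must fall within $\log n$ positions of it on the appropriate side(s); with $t\le\log n$ zeroes per row this has probability $O(\log^4 n/n^2)$ for interior cells, giving expected total short-range length $O(\log^4 n)$ over the whole matrix, hence $\le n/\log n$ with probability $1-O(\log^5 n)/n$ by Markov. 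Short ranges are then computed na\"ively in $O(n/\log n)=O(z)$ gates. The deterministic $O(n\log^4 n)$ algorithm replaces the random permutation by a greedy column-by-column construction that keeps zeroes spread out (never placing a column that would create a short gap, unless forced near the end), not by sorting or dictionaries. Your permutation intuition (``group similar rows'') points in the wrong direction: the purpose of the permutation is to \emph{spread the zeroes within each row}, not to align rows with one another.
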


\begin{proof}[Proof of Theorem~\ref{thm:upperbound}]
Denote the set of rows and the set of columns of~$A$ by~$R$
and~$C$, respectively. Let $R_0 \subseteq R$ be all the rows
having at least $\log n$ zeroes and $R_1=R \setminus R_0$.
We will compute all the ranges of~$A$. From these ranges,
it takes $O(z)$ additional binary gates to compute all the outputs.

We compute the matrices $R_0 \times C$ and $R_1 \times C$
separately. The main idea is that $R_0 \times C$ is easy to compute
because it has a~small number of rows (at most $z/\log n$), while $R_1 \times C$
is easy to compute because it has a~small number of zeroes in every row (at most $\log n$).

The matrix $R_1 \times C$ can be computed using
Lemma~\ref{lemma:permute}. To compute
$R_0 \times C$,
it suffices to compute $C \times R_0$ by a~regular circuit,
thanks to the Observation~\ref{obs:transpose}.
Let $|R_0|=t$. Clearly, $t \le z/\log n$.
Using Lemma~\ref{lemma:decompose}, one can compute all
ranges of $C \times R_0$ by a~circuit of size
\[O(t\log t+z)=O\left(\frac{z}{\log n} \cdot \log z+z\right)=O(z+n)=O(z)\, ,\]
since $z =O(n^2)$.

The algorithm for generating the circuit is just a~combination
of the algorithms from Lemmas~\ref{lemma:decompose} and~\ref{lemma:permute}.
\end{proof}

\begin{proof}[Proof of Lemma~\ref{lemma:decompose}]
We adopt the divide-and-conquer construction by~Alon and Schieber~\cite{Alon87optimalpreprocessing}.
Split the input range $(1,n)$ into two half-ranges of
length~$n/2$:
$(1,n/2)$ and $(n/2+1,n)$.
Compute all suffixes of the left half and all prefixes of
the right half.
Using these precomputed suffixes and
prefixes one can answer any query $(l,r)$ such that $l \le n/2
\le r$ in a~single additional gate. It remains to be able to answer
queries that lie entirely in one of the halves. We do this by
constructing recursively circuits for both halves. The resulting
recurrence relation $T(n) \le 2T(n/2)+O(n)$ implies that the
resulting circuit has size at most $O(n\log n)$.
\end{proof}

\begin{proof}[Proof of Lemma~\ref{lemma:blocks}]
We use the block decomposition technique for
constructing the required circuit.
Partition the input range $(1,n)$ into $n/\log n$ ranges
of length $\log n$ and call them blocks. Compute the range
corresponding to each block (in total size $O(n)$).
Build a~circuit from Lemma~\ref{lemma:decompose} on
top of these blocks. The size of this circuit is $O(n)$ since the
number of blocks is $n/\log n$.
Compute all prefixes and all suffixes of every block. Since
the blocks partition the input range $(1,n)$, this also can be done
with an $O(n)$ size circuit.

Consider any range of length at least $\log n$. Note that it
cannot lie entirely inside the block. Hence, any such range can be
decomposed into three subranges: a~suffix of a~block, a~range
of blocks, and a~prefix of a~block
(where any of the three components may be empty). For example, for $n=16$,
a~range $(3,13)$ is decomposed into a~suffix $(3,4)$ of the
first block,
a~range $(2,3)$ of blocks $(B_1, B_2, B_3, B_4)$, and a~prefix $(13,13)$ of
the last block:
\begin{center}
\begin{tikzpicture}
\foreach \x in {1,...,16}
  \node at (\x,2) {\x};
\draw[draw=white,fill=gray!20!white] (2.5,0.5) rectangle (13.5,1.5);
\foreach \x in {1,...,15}
  \draw (\x+0.5,0.5) -- (\x+0.5,1.5);
\draw (0.5,0.5) rectangle (16.5,1.5);
\foreach \x in {4,8,12}
  \draw[line width=.5mm] (\x+0.5,0.4) -- (\x+0.5,1.6);
\foreach \x/\i in {2/1, 6/2, 10/3, 14/4}
  \node at (\x+0.5,0) {$B_{\i}$};
\end{tikzpicture}
\end{center}
It remains to note that all these three components are already precomputed.
\end{proof}

\begin{proof}[Proof of Lemma~\ref{lemma:permute}.]
The $z$~zeroes of~$A$ breaks its rows into ranges.
Let us call a~range {\em short} is its length is at most $\log n$.
We will show that it is possible to permute the columns of~$A$
so that the total length of all short ranges is at most $O(\frac{n}{\log n})$. Then, all such short ranges can be computed by a~circuit of size $O(\frac{\log n}{n} \cdot n)=O(n)=O(z)$.
All the remaining ranges can be computed by a~circuit of size $O(n)$ using Lemma~\ref{lemma:blocks}.

\begin{description}
\item[Randomized algorithm.]

Permute the columns randomly. A~uniform random permutation
of $n$~objects can be generated in time~$O(n)$~\cite[Algorithm~P (Shuffling)]{DBLP:books/lib/Knuth98}.
Let us compute the expectation of
the total length of short ranges.
Let us focus on a~single row and a~particular cell in it. Denote the number of zeroes in the row by~$t$. What is the probability that the cell belongs to a~short segment? There are two cases to consider.
\begin{enumerate}
\item The cell lies close to the border, i.e., it belongs to
the first $\log n$ cells or to the last~$\log n$ cells
(the number of such cells is $2\log n$). Then,
this cell belongs to a~short range iff there is at least one zero
in $\log n$ cells close to it (on the side opposite to the border).
Hence, one zero must belong to a~set of $\log n$ cells while the remaining $t-1$ zeroes may be anywhere.
The probability is then at most
\[\log n \cdot \frac{\binom{n}{t-1}}{\binom{n}{t}}=\log n \cdot \frac{t}{n-t+1}=O\left(\frac{\log^2n}{n}\right) \, .\]
\item It is not close to the border (the number of such cells is $n-2\log n$). Then, there must be a~zero on both sides of the
cell. The probability is then at most
\[\log^2 n \cdot \frac{\binom{n}{t-2}}{\binom{n}{t}}=\log^2n \cdot \frac{t(t-1)}{(n-t+1)(n-t+2)}=O\left(\frac{\log^4 n}{n^2}\right) \, .\]
\end{enumerate}
Hence, the expected total length of short ranges in one row is
\[O\left( 2\log n \cdot \frac{\log^2 n}{n} + (n-2\log n) \cdot \frac{\log^4 n}{n^2}\right)=O\left(\frac{\log^4 n}{n}\right) \, .\]
Thus, the expected length of short ranges in the whole
matrix~$A$ is $O(\log^4n)$. By Markov inequality, the probability that
the length of all short ranges is larger than $\frac{n}{\log n}$ is
at most $O(\frac{\log^5 n}{n})$.

\item[Deterministic algorithm.]
It will prove convenient to assume that $A$ is a~$t \times t$ matrix with exactly~$t$ zeros with at most $\log t$ zeroes in every row. For this, we let $t=\max\{n, z\}$ and add a~number of
all-ones rows and columns if needed. This enlargement of the matrix
does not make the computation simpler: additional rows mean additional outputs that can be ignored and additional columns correspond to redundant variables that can be removed (substituted by~0) once the circuit is constructed. Below, we show how to deterministically construct a~circuit of size $O(t)$ for~$A$.
For this, we present a~greedy algorithm for permuting the columns
of~$A$ in such a~way that the total length of all short segments
is $O(\log^4n)$. This will follow from the fact that all short
ranges in the resulting matrix~$A$ will lie within the last $O(\log^2 t)$ columns.

We construct the required permutation of columns step by step by a~greedy algorithm. After step~$r$ we will have a~sequence of the first~$r$ columns chosen and we will maintain the following properties:
\begin{itemize}
\item For each $i \leq r$, the first $i$~columns contain at least $i$ zeros.
\item There are no short ranges within the first~$r$ rows (apart from those, that can be extended by adding columns on the right).
\end{itemize}
The process will work for at least $t - \log^2 t$ steps, so short ranges are only possible within the last $\log^2 t + \log t = O(\log^2 t)$ columns.

On the first step, we pick any column that has a zero in it. Suppose
we have reached step~$r$. We explain how to add a~column on
step $r+1$. Consider the last $\log t$ columns in the currently
constructed sequence. Consider the set $R$~of rows that have
zeros in them. These are exactly the rows that constrain our
choice for the next column. There are two cases.
\begin{enumerate}
\item There are at most $\log t$ rows in~$R$. Then, for each row in~$R$, there are at most $\log t$ columns that have zeros in this row. In total, there are at most $\log^2 t$ columns that have zeros in some row of~$R$. Denote the set of this columns by~$F$. If there is an unpicked column outside of~$F$ that has at least one zero in it, we add this column to our sequence. Clearly, both properties are satisfied and the step is over. Otherwise, all other columns contain only ones, so we add all of them to our sequence, place the columns from~$F$ to the end of the sequence, and the whole permutation is constructed.
\item There are more that $\log t$ rows in~$R$. This means that the last $\log t$ columns of the current sequence contain more than $\log t$ zeros. By the first property, the first $r - \log t$ columns contain at least $r - \log t$ zeros. So overall, in the current sequence of~$r$ columns there are more than $r$~zeros. Thus, in the remaining $t-r$ columns there are less then $t-r$ zeros and there is a~column without zeros. We add this column to the sequence.
\end{enumerate}
\end{description}

To implement this algorithm in time $O(t\log^{4}t)$, we store, for each column~$j$ of~$A$, a~sorted array of rows~$i$ such that $A_{ij}=0$. Since the total number of zeros~$z$ is at most~$t\log t$, these arrays can be computed in time $O(t\log^2t)$: if $c_1, \dotsc, c_t$ are the numbers of zeros in the columns, then sorting the corresponding arrays takes time
\[\sum_{i=1}^{t}c_i \log c_i \le \log(t \log t) \cdot \sum_{i=1}^{t}c_i \le \log(t \log t) \cdot t\log t \, .\]

At every iteration, we need to update the set~$R$. For this, we need to remove some rows from it (from the column that no longer belongs to the stripe of columns of width $\log t$) and to add the rows of the newly added column. Since the size of~$|R|$ is always at most~$t$ and the total number of zeros is $z \le t\log t$, the total running time for all such updates is
$O(t\log^2t)$ (if one uses, e.g., a~balanced binary search tree for
representing~$R$).

If $|R| > \log t$, one just takes an all-one column (all such columns
can be stored in a~list). If $|R| \le \log t$, we need to find a~column outside of the set~$F$. For this, we just scan the list of the yet unpicked columns. For each column, we first check whether it belongs to the set~$F$. This can be checked in time $O(\log^2t)$: for every row in~$|R|$, one checks whether this row belongs to the sorted array of the considered column using binary search in time
$O(\log t)$. Since $|F| \le \log^2t$, we will find a~column outside of~$F$ in time~$O(\log^4 t)$.
\end{proof}

\begin{proof}[Proof of Corollary~\ref{cor:matrixmultiplication}]
One deterministically generates a~circuit for~$A$ of size $O(n)$ in time $O(n\log^4n)=O(n^2)$ by Theorem~\ref{thm:upperbound}.
This circuit can be used to multiply~$A$ by any column of~$B$
in time~$O(n)$. For this, one constructs a~topological ordering of the gates of the circuits and computes the values of all gates in
this order. Hence, $AB$ can be computed in time~$O(n^2)$.
\end{proof}

\section{Non-commutative Case}\label{sec-non-commutative}

In the previous section, we have shown that for commutative semigroups dense linear operators can be computed by linear size circuits. A~closer look at the circuit constructions reveals that we use commutativity crucially: it is important that we may reorder the columns of the matrix. In this section, we show that this trick is unavoidable: for non-commutative semigroups, it is not possible to construct linear size circuits for dense linear operators. Namely, we prove Theorem~\ref{thm:lowerbound}.

\lowerthm*

\subsection{Faithful semigroups}

We consider computations over general semigroups that are not necessarily commutative. In particular, we will establish lower bounds for a large class of semigroups and our lower bound does not hold for commutative semigroups. This requires a formal definition that captures semigroups with rich enough structure and in particular requires that a semigroup is substantially non-commutative.

Previously lower bounds in the circuit model for a large class of semigroups were known for the Range Queries problem~\cite{DBLP:conf/stoc/Yao82,DBLP:journals/ijcga/ChazelleR91}. These result are proven for a large class of commutative semigroups that are called \emph{faithful} (we provide a formal definition below). Since we are dealing with non-commutative case we need to generalize the notion of faithfulness to non-commutative semigroups.

To provide formal definition of faithfulness it is convenient to introduce the following notation.
Suppose $(S, \circ)$ is a
semigroup. Let $X_{S,n}$ be a semigroup with generators $\{x_1,\ldots, x_n\}$
and with the equivalence relation consisting of identities in variables
$\{x_1,\ldots, x_n\}$ over~$(S,\circ)$. That is, for two words $W$ and $W'$ in
the alphabet $\{x_1,\ldots,x_n\}$ we have $W\sim W'$ in $X_{S,n}$ iff no matter which
elements of the semigroup~$S$ we substitute for $\{x_1,\ldots, x_n\}$ we obtain
a~correct equation over~$S$. In particular, note that if $S$~is commutative
(respectively, idempotent), then $X_{S,n}$ is also commutative (respectively,
idempotent).
The semigroup $X_{S,n}$ is studied in algebra under the name of relatively free semigroup of rank $n$ of a variety generated by semigroup $S$~\cite{neumann2012varieties}.
We will often omit the subscript $n$ and write simply $X_S$ since the number of generators will be clear from the context.

Below we will use the following notation. Let $W$ be a word in the alphabet
$\{x_1,\ldots, x_n\}$. Denote by $\var(W)$ the set of letters that are present
in $W$.


We are now ready to introduce the definition of a commutative faithful semigroup.

\begin{definition}[\cite{DBLP:conf/stoc/Yao82,DBLP:journals/ijcga/ChazelleR91}]
A commutative semigroup $(S, \circ)$ is \emph{faithful commutative} if for any
equivalence $W\sim W'$ in $X_S$ we have $\var(W)=\var(W')$.
\end{definition}

Note that this definition does not pose any restrictions on the cardinality of
each letter in $W$ and $W'$. This allows to capture in this definition
important cases of idempotent semigroups. For example, semigroups
$(\{0,1\}, \vee)$ and $(\mathbb{Z},\min)$ are commutative faithful.

We need to study the non-commutative case, and moreover, our results
establish the difference between commutative and non-commutative cases. Thus,
we need to extend the notion of faithfulness to non-commutative semigroups to
capture their non-commutativity in the whole power. At the same time we would
like to keep the case of idempotency. We introduce the notion of faithfulness
for the non-commutative case inspired by the properties of free idempotent
semigroups~\cite{GreenR52}. To introduce this notion we need several
definitions.

The \emph{initial mark} of $W$ is the letter that is present in $W$ such that
its first appearance is farthest to the right. Let $U$ be the prefix of $W$
consisting of letters preceding the initial mark. That is, $U$ is the maximal
prefix of $W$ with a smaller number of generators. We call $U$ the
\emph{initial} of $W$. Analogously we define the \emph{terminal mark} of $W$ and
the \emph{terminal} of $W$.

\begin{definition}\label{def:strong_non_commutativity}
We say that a semigroup $X$ with generators $\{x_1,\ldots, x_n\}$ is
\emph{strongly non-commutative} if for any words $W$ and $W'$ in the
alphabet $\{x_1,\ldots, x_n\}$ the equivalence $W\sim W'$ holds in $X$ only if
the initial marks of $W$ and $W'$ are the same, terminal marks are the same,
the equivalence $U \sim U'$ holds in $X$, where $U$ and $U'$ are the initials of
$W$ and $W'$, respectively, and the equivalence $V \sim V'$ holds in $X$, where
$V$ and $V'$ are the terminals of $W$ and $W'$, respectively.
\end{definition}

In other words, this definition states that the first and the last occurrences
of generators in the equivalence separates the parts of the equivalence that
cannot be affected by the rest of the generators and must therefore be equivalent themselves. We also note that this definition exactly captures the
idempotent case: for a free idempotent semigroup the condition in this
definition is ``if and only if''\cite{GreenR52}.

\begin{definition} \label{def:faithful}
A semigroup $(S, \circ)$ is \emph{faithful non-commutative} if $X_S$ is strongly non-commutative.
\end{definition}

We note that this notion of faithfulness is relatively general and is true for
semigroups $(S,\circ)$ with considerable degree of non-commutativity in their
structure. It clearly captures free semigroups with at least two generators. It is also easy to see that the
requirements in Definition~\ref{def:faithful} are satisfied for the free
idempotent semigroup with $n$ generators (if $S$ is idempotent, then $X_{S,n}$ is also
clearly idempotent and no other relations are holding in $X_{S,n}$ since we can
substitute generators of $S$ for $x_1, \ldots, x_n$).

Next we observe some properties of strongly non-commutative semigroups that we
need in our constructions.

\begin{lemma} \label{lem:prefix_equivalence}
Suppose $X$ is strongly non-commutative. Suppose the equivalence $W \sim W'$
holds in~$X$ and $|\var(W)|=|\var(W')|=k$. Suppose $U$~and~$U'$ are minimal
(maximal) prefixes of $W$ and $W'$ such that $|\var(U)| = |\var(U')| = l\leq k$.
Then the equivalence $U \sim U'$ holds in $X$. The same is true for suffixes.
\end{lemma}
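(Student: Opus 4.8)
The plan is to prove Lemma~\ref{lem:prefix_equivalence} by induction on $k - l$, peeling off one "mark" at a time using the definition of strong non-commutativity (Definition~\ref{def:strong_non_commutativity}). The base case $l = k$ is immediate: if $U$ is the minimal prefix of $W$ with all $k$ variables, then... actually we need to be careful. Let me think about which direction of induction and which prefix (minimal/maximal) to handle.

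Let me reconsider. For the *maximal* prefix $U$ of $W$ with $|\var(U)| = l$: the letter immediately following $U$ is a variable not occurring in $U$, and in fact $U$ together with that next letter is the *minimal* prefix with $l+1$ distinct variables. The terminal mark and terminal of $W$ concern suffixes, so for prefixes I want to iterate the *initial* side — but iterating initials decreases the variable count, which is the wrong direction if I start from $W$ and want to reach a short prefix $U$. Instead, the clean approach: the *initial* $U_W$ of $W$ is the maximal prefix with $k-1$ distinct variables, and by Definition~\ref{def:strong_non_commutativity}, $W \sim W'$ forces $U_W \sim U_{W'}$ with $|\var(U_W)| = |\var(U_{W'})| = k-1$ (the initial marks agree, so both initials have exactly $k-1$ variables). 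This is exactly the statement for the maximal prefix with $l = k-1$. Now iterate: apply the definition again to $U_W \sim U_{W'}$ to get the maximal prefixes with $k-2$ variables equivalent, and so on down to level $l$. This proves the maximal-prefix case by induction on $k - l$.

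For the *minimal* prefix $U$ with $|\var(U)| = l$: if $\widehat{U}$ denotes the maximal prefix of $W$ with $l-1$ distinct variables, then $U = \widehat{U}\,c$ where $c$ is the $l$-th distinct letter to appear (the initial mark of the prefix $W_{\le l}$, equivalently the letter right after $\widehat{U}$). By the maximal-prefix case already established, $\widehat{U} \sim \widehat{U}'$ and the corresponding initial marks agree — but I need the letter $c$ following $\widehat{U}$ in $W$ to equal the letter $c'$ following $\widehat{U}'$ in $W'$. This follows because, taking the maximal prefix of $W$ with $l$ distinct variables and applying Definition~\ref{def:strong_non_commutativity} to *that* equivalence, its initial mark is precisely $c$, and initial marks must agree, so $c = c'$. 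Then $U \sim U'$ follows from $\widehat{U} \sim \widehat{U}'$ by congruence (appending the same letter $c$ to both sides of an equivalence in the semigroup $X$). The suffix statements are symmetric, using terminals/terminal marks in place of initials/initial marks.

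**The main obstacle** I anticipate is purely bookkeeping: correctly matching up "minimal vs.\ maximal prefix with $l$ distinct variables" against the "initial = maximal prefix with one fewer variable" phrasing in Definition~\ref{def:strong_non_commutativity}, and making sure the induction is on the right quantity ($k-l$) in the right direction. One subtle point worth stating explicitly: when we write $U \sim U'$ we must check $|\var(U)| = |\var(U')| = l$, i.e.\ that peeling initials strictly decreases the variable count by exactly one at each step and that the counts stay synchronized between $W$ and $W'$ — this is guaranteed because Definition~\ref{def:strong_non_commutativity} forces equal initial marks, hence the two initials drop the same variable. No deep idea is needed beyond iterating the definition; the content of the lemma is just that strong non-commutativity, which is stated as a one-step peeling property, automatically gives the multi-step version.
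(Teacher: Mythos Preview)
Your proposal is correct and follows essentially the same approach as the paper: downward induction on $l$ (equivalently, induction on $k-l$), observing that the maximal prefix with $l-1$ distinct variables is precisely the initial of the maximal prefix with $l$ variables, and that the minimal prefix with $l$ variables is the maximal prefix with $l-1$ variables followed by the (common) initial mark. Your write-up is in fact more explicit than the paper's about why the variable counts stay synchronized at each step.
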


\begin{proof}
The proof is by induction on the decreasing $l$. Consider the maximal prefixes
first. For $l=k$ and maximal prefixes we just have $U=W$ and $U'=W'$. Suppose
the statement is true for some $l$, and denote the corresponding prefixes by $U$
and $U'$, respectively. Then note that the maximal prefixes with $l-1$ variables
are initials of $U$ and $U'$. And the statement follows by
Definition~\ref{def:strong_non_commutativity}.

The proof of the statement for minimal prefixes is completely analogous. Note
that on the step of induction the prefixes differ from the previous case by one
letter that are initial marks of the corresponding prefixes. So these additional
letters are also equal by the Definition~\ref{def:strong_non_commutativity}.

The case of suffixes is completely analogous.
\end{proof}

The next lemma is a simple corollary of Lemma~\ref{lem:prefix_equivalence}.
\begin{lemma} \label{lem:variables_order}
Suppose $X$ is strongly non-commutative. Suppose $W \sim W'$ holds in $X$. Let us write down the letters of $W$ in the order in which they appear first time in $W$ when we read it from left to right. Let's do the same for $W'$. Then we obtain exactly the same sequences of letters.

The same is true if we read the words from right to left.
\end{lemma}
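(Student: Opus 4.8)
The plan is to derive Lemma~\ref{lem:variables_order} as a direct consequence of Lemma~\ref{lem:prefix_equivalence}. First I would observe that the sequence of letters obtained by reading $W$ left to right and recording each generator the first time it appears is determined by the initial marks of a nested chain of prefixes. Precisely, if $|\var(W)|=k$, then for each $l$ with $1 \le l \le k$ let $U_l$ be the \emph{minimal} prefix of $W$ with $|\var(U_l)|=l$; the last letter of $U_l$ is exactly the $l$-th distinct generator encountered when reading $W$ from the left. So the ``first-appearance sequence'' of $W$ is $(b_1, b_2, \ldots, b_k)$ where $b_l$ is the final letter of $U_l$.

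Next I would apply Lemma~\ref{lem:prefix_equivalence}: since $W \sim W'$ and $|\var(W)|=|\var(W')|=k$ (the latter equality itself follows from strong non-commutativity by a trivial induction, or can be extracted from Lemma~\ref{lem:prefix_equivalence} with $l=1$ together with the initial-mark condition), for every $l$ the minimal prefixes $U_l$ of $W$ and $U'_l$ of $W'$ with $l$ variables satisfy $U_l \sim U'_l$. In particular $U_1 \sim U'_1$ forces their initial marks — which for a word with one variable is simply that single repeated letter — to coincide, and for $l \ge 2$ the fact that $U_l \sim U'_l$ together with Definition~\ref{def:strong_non_commutativity} forces the initial marks of $U_l$ and $U'_l$ to be equal. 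But the initial mark of $U_l$ is precisely its last letter $b_l$ (the minimal prefix with $l$ variables ends exactly at the first occurrence of its $l$-th generator, so that generator is the one whose first appearance is farthest to the right within $U_l$). Hence $b_l = b'_l$ for all $l$, i.e. the two first-appearance sequences are identical. Reading right to left is handled by the symmetric statement for suffixes in Lemma~\ref{lem:prefix_equivalence}, applied to terminal marks of minimal suffixes, by an entirely analogous argument.

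I do not anticipate a serious obstacle here; the lemma is genuinely a corollary. The one point that needs a little care is the bookkeeping identifying ``last letter of the minimal prefix with $l$ variables'' with ``initial mark of that prefix'' and with ``the $l$-th generator in reading order'' — a short unwinding of the definitions of initial mark and initial. A second minor point is justifying $|\var(W)|=|\var(W')|$ before invoking Lemma~\ref{lem:prefix_equivalence}, which is why I would either note it follows from $l=1$ in that lemma or dispatch it separately first. Everything else is immediate from the nested-prefix chain and repeated application of the initial-mark clause of Definition~\ref{def:strong_non_commutativity}.
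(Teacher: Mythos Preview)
Your proposal is correct and matches the paper's intended approach: the paper explicitly states that Lemma~\ref{lem:variables_order} is a simple corollary of Lemma~\ref{lem:prefix_equivalence}, and your argument via the chain of minimal prefixes $U_l$ and their initial marks is precisely the unpacking of that corollary. The bookkeeping points you flag (identifying the last letter of the minimal $l$-variable prefix with its initial mark, and establishing $|\var(W)|=|\var(W')|$) are handled exactly as you describe and pose no difficulty.
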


\subsection{Proof Strategy}



We now proceed to the proof of Theorem~\ref{thm:lowerbound}.

The upper bound follows easily by a naive algorithm: split all rows of $A$ into ranges, compute all ranges by a circuit of size $O(n\alpha(n))$ using Yao's construction~\cite{DBLP:conf/stoc/Yao82}, then combine ranges into rows of $A$ using $O(n)$ gates.

Thus we will concentrate on lower bounds. We will view the computation of the circuit as a computation in a strongly non-commutative semigroup $X=X_S$.

We will use the following proof strategy.
First we observe that it is enough to prove the lower bound for the case of idempotent strongly non-commutative semigroups $X$. Indeed, if $X$ is not idempotent, we can factorize it by idempotency relations and obtain a strongly non-commutative idempotent semigroup $X_{id}$. A lower bound for the case of $X_{id}$ implies lower bound for the case of $X$. We provide a detailed explanation in Section~\ref{sec:noncommutative_extension}.

Hence, { from this point we can assume that $X$ is idempotent and strongly non-commutative}.
Next for idempotent case we show that our problem is equivalent to the commutative version of the range query problem.

For a semigroup $X$ with generators $\{x_1,\ldots, x_n\}$ denote by $X_{sym}$ its factorization under commutativity relations $x_i x_j \sim x_j x_i$ for all $i,j$. Note that if $X$ is idempotent and strongly non-commutative, then $X_{sym}$ is just the semigroup in which $W \sim W'$ iff $\var(W)=\var(W')$ (this is free idempotent commutative semigroup).

\begin{theorem}\label{thm:equivalence}
For an idempotent strongly non-commutative $X$ and for any $s=\Omega(n)$ we have that (commutative) range queries problem over $X_{sym}$ has size $O(s)$ circuits iff (non-commutative) dense linear operator problem over $X$ has size $O(s)$ circuits.
\end{theorem}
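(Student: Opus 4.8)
The plan is to prove Theorem~\ref{thm:equivalence} by establishing two reductions, one in each direction, both preserving circuit size up to a constant factor.

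\textbf{From range queries over $X_{sym}$ to dense linear operators over $X$.} Suppose we have size-$O(s)$ circuits for range queries over $X_{sym}$. Given an instance of the dense linear operator problem over $X$ — a matrix $A$ with $O(n)$ zeroes — I would first split each row of $A$ into maximal ranges delimited by its zeroes; there are $O(n)$ such ranges in total. Each output of $Ax$ is a product of at most a constant number... actually, up to $O(n)$ of these ranges, so combining ranges into outputs costs $O(n)=O(s)$ additional gates. It therefore suffices to compute all these $O(n)$ ranges. Now comes the point where idempotency and strong non-commutativity are used: by Lemma~\ref{lem:variables_order}, in the idempotent strongly non-commutative semigroup $X$, a range $x_l \circ x_{l+1} \circ \dotsb \circ x_r$ is completely determined — as an element of $X$ — by the \emph{set} of its variables together with the requirement that they occur in increasing index order. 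More precisely, since $X$ is idempotent and strongly non-commutative, two contiguous ranges that have the same variable set are equivalent in $X$ (both reduce to the same canonical word, the sorted product). So a circuit that correctly computes range products over the \emph{commutative} idempotent semigroup $X_{sym}$, when its inputs $x_1,\dots,x_n$ are fed in index order and its gates are read with a consistent left-to-right orientation, already computes the correct element of $X$: every gate computes a product of some set of inputs, and by strong non-commutativity plus idempotency the value of that product in $X$ depends only on the set. Hence the same circuit (with the same size) works over $X$, and adding the $O(n)$ combining gates yields a size-$O(s)$ circuit for $Ax$.

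\textbf{From dense linear operators over $X$ to range queries over $X_{sym}$.} For the converse, suppose dense linear operators over $X$ have size-$O(s)$ circuits. Given a range queries instance over $X_{sym}$ with queries $(l_1,r_1),\dots,(l_m,r_m)$, I would encode the queries as rows of a $0/1$ matrix: row $k$ has ones exactly in positions $l_k,\dots,r_k$. This matrix is \emph{not} a complement of a sparse matrix in general, so I cannot directly invoke the hypothesis; instead I would use the standard trick of padding/decomposing. Actually the cleaner route: observe that the hardness direction of Theorem~\ref{thm:lowerbound} only needs the \emph{lower bound} transfer, so what we really want here is: a size-$O(s)$ circuit over $X$ for the dense linear operator problem gives a size-$O(s)$ circuit over $X_{sym}$ for range queries. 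Given any circuit over $X_{sym}$ solving range queries, factoring through commutativity relations is harmless; conversely, given a circuit over $X$, we can factor it by the commutativity relations $x_ix_j \sim x_jx_i$ to get a circuit of the same size over $X_{sym}$ computing the commutative images of the same words — and the commutative image of a range product is exactly the range product over $X_{sym}$. So a circuit for $Ax$ over $X$, where $A$ is the matrix whose rows are the query ranges padded appropriately to become a complement-of-sparse matrix, projects down to a circuit over $X_{sym}$ for the padded range-query instance, from which the original queries are read off.

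\textbf{Main obstacle.} The subtle point is the padding/reduction making an arbitrary range-queries instance look like a dense linear operator instance (matrix with $O(n)$ zeroes), since arbitrary range queries give matrices with many zeroes per row. I expect the resolution is to reduce to the \emph{specific} matrix with two zeroes per row that appears in the statement of Theorem~\ref{thm:lowerbound} — i.e., this theorem is really the bridge that lets the range-queries lower bound of Yao / Chazelle--Rosenberg be ported to that specific dense matrix — and so the ``range queries'' in the theorem statement should be understood as the corresponding restricted family. I would make this precise by identifying which range-query instance the two-zeroes-per-row matrix corresponds to (each row being the complement of two singleton columns, i.e., ``all indices except $i$ and $j$'', which is a union of at most three ranges) and checking that the known $\Omega(n\alpha(n))$ lower bound for range queries survives this encoding. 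The equivalence of circuit sizes up to constants then follows because all transformations — splitting rows into a constant number of ranges, combining ranges into rows, and factoring by commutativity — change the size by only an additive or multiplicative constant.
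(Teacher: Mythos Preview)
Your proposal contains a genuine error in the first direction and a substantial gap in the second.

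\textbf{First direction.} Your claim that ``by strong non-commutativity plus idempotency the value of that product in $X$ depends only on the set'' is false --- strong non-commutativity means precisely the opposite. In $X$ one generally has $x_1 x_3 \not\sim x_3 x_1$, and a gate in an $X_{sym}$-circuit may well compute a non-range like $x_1 \circ x_3$ and then combine it with $x_2$; over $X_{sym}$ this yields $x_1 x_2 x_3$, but over $X$ the product $(x_1 x_3) x_2$ is not equivalent to $x_1 x_2 x_3$. So the same circuit does \emph{not} work over $X$. The paper handles this direction (Lemma~\ref{lem:intervals}) by a nontrivial restructuring argument: it shows that any $X_{sym}$-circuit computing ranges can be rebuilt, without increasing its size, into a \emph{range circuit} in which every gate computes a contiguous range. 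This uses a lexicographic semi-invariant $(\lef,\gap,\num,-\out)$ on the offending (non-range) gate of minimal tuple, together with a case analysis that rewires the circuit to strictly increase the invariant. Only once every gate computes a range can one orient the products in increasing order and lift the circuit to $X$.

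\textbf{Second direction.} Packing each query $(l,r)$ into a row with zeroes at positions $l-1$ and $r+1$ is exactly what the paper does, but your plan to ``factor by commutativity and read off the queries'' does not work: the output gate for that row computes, over $X_{sym}$, the product of \emph{all} variables except $x_{l-1}$ and $x_{r+1}$, not the desired range $x_l\cdots x_r$. One must locate the range $(l,r)$ at some \emph{internal} gate, and a priori no gate of the circuit is obliged to compute it. The paper resolves this with two transformations (Lemmas~\ref{lemma:correctorder} and~\ref{lemma:matrixranges}): first rebuild the circuit into an \emph{increasing circuit} in which every gate computes a product of variables in increasing index order (this step uses the ``good variable'' machinery and Lemma~\ref{lem:variables_order} in an essential way), and then split each increasing gate into its constituent ranges, yielding a range circuit of the same size in which every individual range of $A$ --- including $(l,r)$ --- is computed at some gate.
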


Using this theorem, it is straightforward to finish the proof of
Theorem~\ref{thm:lowerbound}.
Indeed, by Theorem~\ref{thm:equivalence} if non-commutative dense linear operator problem has size $s$ circuit, then the commutative range queries problem also does. However, for the latter problem it is proved by Chazelle and Rosenberg~\cite{DBLP:journals/ijcga/ChazelleR91} that $s=\Omega(n \alpha(n))$. Moreover, in our construction for the proof of Theorem~\ref{thm:equivalence} it is enough to consider dense linear operators with exactly two zeroes in every row. From this the second part of Theroem~\ref{thm:lowerbound} follows.

Note that for the proof of Theorem~\ref{thm:lowerbound} only one direction of Theorem~\ref{thm:equivalence} is needed. However, we think that the equivalence in Theorem~\ref{thm:equivalence} might be of independent interest, so we provide the proof for both directions.

Thus, it remains to prove Theorem~\ref{thm:equivalence}. We do this by showing the following equivalences for any $s = \Omega(n)$.

\begin{center}
\begin{tikzpicture}[scale=0.95,transform shape]
\tikzstyle{v}=[rectangle,draw,inner sep=1mm,text width=36mm,above right,minimum height=20mm]

\node[v] (a) at (0,0) {(commutative) range queries problem over $X_{sym}$ has $O(s)$ size circuits};

\node[v] (b) at (6.5,0) {(non-commutative) range queries problem over $X$ has $O(s)$ size circuits};

\node[v] (c) at (13,0) {(non-commutative) dense linear operator problem over $X$ has $O(s)$ size circuits};

\path (a.10) edge[->] node[above] {Lemma~\ref{lem:intervals}} (b.170);
\path (b.190) edge[->] node[below] {special case} (a.-10);
\path (b.10) edge[->] node[above] {straightforward} (c.170);
\path (c.190) edge[->] node[below] {Lemma~\ref{lem:dense_matrices}} (b.-10);
\end{tikzpicture}
\end{center}


Note that two of the reductions on this diagram are trivial. The other two are formulated in the following lemmas.

%
%

\begin{lemma} \label{lem:dense_matrices}
If the (non-commutative) dense linear operator problem over $X$ has size $s$ circuit then the (non-commutative) range queries problem over $X$ has size $O(s)$ circuit.
\end{lemma}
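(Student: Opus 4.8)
The plan is to realize the (non-commutative) range queries problem as a special case of the dense linear operator problem, up to an $O(n)$ additive overhead. Recall that by the reduction in the proof strategy we may assume $X$ is idempotent and strongly non-commutative, so that equivalence of words in $X$ is governed by the recursive initial/terminal structure of Lemmas~\ref{lem:prefix_equivalence} and~\ref{lem:variables_order}. Given a range queries instance on formal variables $x_1,\dots,x_n$ with queries $(l_i,r_i)$, $i=1,\dots,m$ (we may take $m=O(n)$, which is the regime relevant for Theorem~\ref{thm:lowerbound} and for the hard instance of Chazelle and Rosenberg), we first prepend and append two fresh variables and drop the trivial full-sequence queries, so that $2\le l_i$ and $r_i\le n-1$. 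We then form the $m\times n$ matrix $B$ whose $i$-th row has $0$'s exactly in columns $l_i-1$ and $r_i+1$ and $1$'s elsewhere; thus $B$ has exactly two zeros per row, $O(n)$ zeros in total, and
\[
(Bx)_i \;=\; \bigl(x_1\circ\cdots\circ x_{l_i-2}\bigr)\;\circ\;\bigl(x_{l_i}\circ\cdots\circ x_{r_i}\bigr)\;\circ\;\bigl(x_{r_i+2}\circ\cdots\circ x_n\bigr)\;=\;P_i\circ w_i\circ Q_i,
\]
where $w_i$ is exactly the range query to be computed and $\var(P_i),\var(w_i),\var(Q_i)$ are pairwise disjoint. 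By hypothesis there is a circuit $C$ of size $O(s)$ computing $Bx$; also all prefixes $x_1\circ\cdots\circ x_j$ and suffixes $x_j\circ\cdots\circ x_n$ are computable by an auxiliary circuit of size $O(n)=O(s)$ (as in Lemma~\ref{lemma:easy}), which in particular produces the words $P_i$ and $Q_i$.

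The heart of the argument is to extract the $w_i$ from $C$. Merely post-processing the outputs of $C$ cannot work: any word obtained from $(Bx)_i$ by further applications of $\circ$ has variable set containing $\var\bigl((Bx)_i\bigr)$, which strictly contains $\var(w_i)$. Instead we open up $C$ and use strong non-commutativity to ``truncate'' the computation. The point is that $w_i$ is exposed inside $P_i\circ w_i\circ Q_i$ by the initial/terminal structure: by Lemma~\ref{lem:prefix_equivalence} the maximal prefix of $(Bx)_i$ on $|\var(P_i)|+|\var(w_i)|$ distinct generators is equivalent to $P_i\circ w_i$, and the maximal suffix of \emph{that} word on $|\var(w_i)|$ distinct generators is equivalent to $w_i$; furthermore, by Lemma~\ref{lem:variables_order} the generators of the word computed at each gate of $C$ appear in a fixed order along any equivalent word, so these prefix/suffix truncations can be propagated gate by gate. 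Processing $C$ in topological order, we maintain for each gate $g$ and each query $i$ ``touched'' by $g$ the appropriate truncation of the word at $g$, obtained from the truncations stored at the predecessors of $g$ by $O(1)$ additional semigroup operations. At the (truncated) output gate for query $i$ this yields a word equivalent to $w_i$, and, provided the number of surviving (gate,\,query) incidences is $O(|C|+n)$, the total number of new gates is $O(s)$. Together with the auxiliary prefix/suffix circuit this gives a circuit of size $O(s)$ for all range queries, completing the reduction.

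The step that will require the most care is precisely this truncation bookkeeping: one must verify, using Definition~\ref{def:strong_non_commutativity} and Lemmas~\ref{lem:prefix_equivalence} and~\ref{lem:variables_order}, that the truncation operation is compatible with the gate structure (so that the propagated words are genuinely $X$-equivalent to the required prefixes and suffixes of the $w_i$) --- this is exactly where idempotency is indispensable, which is why the claim is only asserted after the reduction to the idempotent case --- and that the truncation does not blow up the number of relevant gates, so that the final circuit size stays $O(s)$ rather than, say, $O(s+mn)$. Everything else (the construction of $B$, the auxiliary prefix/suffix circuit, the reduction of large query sets to the case $m=O(n)$) is routine.
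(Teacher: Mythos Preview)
Your packing of the range queries into a matrix with exactly two zeros per row is exactly what the paper does, and the auxiliary prefix/suffix circuit is harmless. The gap is in the extraction step.

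Your proposal processes the circuit gate by gate while maintaining, for each pair (gate~$g$, query~$i$), a ``truncated'' version of the word at~$g$. You yourself flag the danger: the construction only gives an $O(s)$ circuit ``provided the number of surviving (gate,\,query) incidences is $O(|C|+n)$''. But you give no argument for this bound, and in general there is no reason for it to hold. A single internal gate may feed (along different paths) into many output gates, and the truncation you need at that gate depends on the query, so a priori you may need $\Theta(m)$ distinct truncated copies of a single gate, leading to size $\Theta(sm)$ rather than $O(s)$. The bookkeeping you defer to the end is not a technical verification; it is the whole difficulty, and as stated your scheme does not control it. (A secondary issue is that the truncation-compatibility you need at an internal gate, namely that the truncation of $f\circ h$ is the product of the truncations of $f$ and $h$, is not a consequence of Lemmas~\ref{lem:prefix_equivalence} and~\ref{lem:variables_order} for arbitrary words; those lemmas speak about pairs of \emph{equivalent} words, not about how a prefix/suffix operation interacts with concatenation.)

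The paper sidesteps both problems by making the transformation purely \emph{per gate} rather than per (gate,\,query). First (Lemma~\ref{lemma:correctorder}) the circuit is rewritten, gate by gate, into an \emph{increasing circuit}: every gate now computes a word equivalent to a product of variables in increasing index order. This is the substantive step and uses strong non-commutativity and idempotency in a nontrivial way (via the ``good variable'' analysis and the $\mmin_G$/$\mmax_G$ sandwich). Once every gate is increasing, a second pass (Lemma~\ref{lemma:matrixranges}) replaces each gate by a short list of gates computing the constituent ranges of its increasing word; combining two matching increasing words costs at most one new gate, so the size does not grow. At the outputs the ranges of $A$ --- in particular the queried ranges $(l_i,r_i)$ --- are then computed directly, with no per-query duplication anywhere in the circuit.
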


\begin{lemma} \label{lem:intervals}
If the (commutative) version of the range queries problem over $X_{sym}$ has size $s$ circuits then the (non-commutative) version over $X$ also does.
\end{lemma}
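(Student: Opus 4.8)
The plan is to take a circuit $C$ of size $s$ that computes all range queries over the free idempotent commutative semigroup $X_{sym}$ (where the value of a range $(l,r)$ is simply the set $\{l, l+1, \dots, r\}$) and to transform it, gate by gate, into a circuit $C'$ of size $O(s)$ that computes the same ranges over the strongly non-commutative idempotent semigroup $X$ (where now the value of $(l,r)$ must be a word whose letters, in order of first appearance reading left to right, are exactly $x_l, x_{l+1}, \dots, x_r$). The key structural fact I would exploit is Lemma~\ref{lem:variables_order}: in $X$, an equivalence class of words is determined by its variable set \emph{together with} the left-to-right first-appearance order and the right-to-left first-appearance order of its letters. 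For ranges, all three data are forced — the set is $\{l,\dots,r\}$, the left order is $l, l+1, \dots, r$, and the right order is $r, r-1, \dots, l$ — so a circuit over $X$ computes the ranges iff at each output gate these two orderings come out correct.

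The first step is to set up an invariant on the gates of $C$. Since $C$ is a commutative circuit over $X_{sym}$, every gate $g$ computes some subset $S_g \subseteq \{1,\dots,n\}$; I would like to argue that, at the cost of a constant blow-up, one may assume each $S_g$ is in fact a \emph{contiguous interval}. This should follow from a cleaning/normalization argument on Alon–Schieber-style optimal range circuits, or more robustly: I would first invoke the known structure of optimal circuits for ranges (each gate's set is an interval — ranges only "need" interval-sums, and a standard argument shows non-interval gates can be discarded or split without increasing size by more than a constant factor), so WLOG every gate of $C$ computes an interval $[a_g, b_g]$. Now I build $C'$ on the same underlying DAG: the input gate $j$ computes $x_j$ in $X$; a gate $g$ with incoming wires from $g_1, \dots, g_k$ (in the circuit's wire order) whose intervals are $[a_{g_1},b_{g_1}], \dots, [a_{g_k},b_{g_k}]$ with union $[a_g,b_g]$ — I reorder these $k$ predecessors by left endpoint and feed them into $g$ in that sorted order. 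Because the intervals tile $[a_g,b_g]$, feeding them left-to-right produces a word whose first-appearance order (both directions) is the correct monotone order, hence by Lemma~\ref{lem:variables_order} gate $g$ in $C'$ computes the range $(a_g,b_g)$ over $X$. In particular the output gates of $C'$ compute the required ranges.

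The remaining issue is that sorting the $k$ predecessors of a gate by left endpoint is not literally free in the circuit model — a gate just applies $\circ$ to its inputs in \emph{one} fixed order, and we do not get to rearrange wires for free, but in fact we \emph{do}: the circuit model here (Section~\ref{subsec:circuits}) allows an arbitrary underlying order on the incoming wires of each gate, so reordering the wires of a single gate is a cost-zero operation and changes nothing about the size. Thus $C'$ has exactly the same size as the (normalized) $C$, which is $O(s)$. This completes the reduction. The one place that needs genuine care — and which I expect to be the main obstacle — is the normalization step claiming that an optimal or near-optimal commutative range circuit can be assumed to have every gate computing an interval; if this is not available off the shelf in a clean form, the fallback is to prove it directly: given any gate computing a non-interval set $S_g$, observe that $S_g$ is only ever used (transitively) to help assemble interval outputs, so one can replace $g$ by at most a constant number of interval-gates covering $S_g$ relative to how it is consumed downstream, amortizing the cost against the wires of $C$. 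Making this amortization tight — ensuring the total blow-up is $O(s)$ and not $\omega(s)$ — is the delicate point, and I would structure the proof so that all the non-triviality is isolated into that single normalization lemma.
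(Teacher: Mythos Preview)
Your overall structure is right, and you have correctly isolated the single non-trivial step: once every gate of the commutative circuit computes an interval, reordering the predecessors of each gate by left endpoint (using idempotency to absorb overlaps) does produce a valid circuit over $X$ of the same size. The paper does exactly this in its opening paragraph.

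The gap is in the normalization step. Neither of your two proposed routes works. There is no off-the-shelf structural result saying that optimal (or near-optimal) commutative range circuits have interval-valued gates; an adversary may hand you a size-$s$ circuit with arbitrarily ugly intermediate sets. Your fallback --- replacing a non-interval gate $g$ by ``at most a constant number of interval-gates covering $S_g$ relative to how it is consumed downstream'' --- does not go through either: $S_g$ may have many maximal sub-intervals, and $g$ may feed into many successors each of which uses a different portion of $S_g$, so there is no local replacement with bounded cost and no evident amortization against the existing wires.

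The paper's proof of this normalization is the bulk of the argument and is genuinely delicate. It does not replace gates at all; instead it \emph{rewires} the circuit, possibly changing the sets computed at intermediate gates. Concretely, for each non-range gate $G$ they define a lexicographic potential $\tup(G) = (\lef(G), \gap(G), \num(G), -\out(G))$, take $\tup(\mathcal{C})=\min_G\tup(G)$, and exhibit a local surgery around the minimizing gate that strictly increases $\tup(\mathcal{C})$ without increasing the number of gates. In the hard case (their Case~2) the surgery reroutes edges so that some downstream gates $H_p$ now compute \emph{larger} sets than before --- but the extra variables picked up lie strictly between $\lef(H_p)$ and $\righ(H_p)$, and hence must already appear in every range output reachable from $H_p$. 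This ``it is safe to add variables that are already sandwiched'' observation is the idea your sketch is missing, and it is what makes a size-preserving (not merely $O(s)$) normalization possible.
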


The proofs of these lemmas are presented in Sections~\ref{sec:operators_to_queries} and~\ref{sec:non-commutative_to_commutative} respectively.


\subsection{From Idempotent Semigroups to General Semigroups}\label{sec:noncommutative_extension}

In this section we provide a detailed explanation of the reduction in Theorem~\ref{thm:lowerbound} from general semigroups to idempotent semigroups.

Consider an arbitrary strongly non-commutative semigroup $X$. Consider a new semigroup $X_{id}$ over the same set of generators that is a factorization of $X$ by idempotency relations $W^2\sim W$ for all words $W$ in the alphabet $\{x_1,\ldots, x_n\}$.

\begin{lemma} \label{lem:idempotisation}
If $X$ is strongly non-commutative, then $X_{id}$ is also strongly non-commutative.
\end{lemma}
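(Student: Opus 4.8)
The plan is to verify the three conditions in Definition~\ref{def:strong_non_commutativity} directly for $X_{id}$, using the fact that they already hold for $X$. The key conceptual point is that passing to $X_{id}$ only adds relations, so an equivalence $W \sim W'$ in $X_{id}$ is a consequence of the relations of $X$ together with the idempotency relations $U^2 \sim U$. I want to argue that the initial mark, terminal mark, initial, and terminal of a word are all \emph{invariant} under a single application of an idempotency relation, and hence under any chain of them; combined with invariance under the relations of $X$ (which is exactly strong non-commutativity of $X$), this gives the claim.

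First I would observe the elementary fact that the set $\var(W)$, the left-to-right first-occurrence order of the letters of $W$, and the right-to-left first-occurrence order are all unchanged when we replace a subword occurrence $V^2$ by $V$ (or vice versa): deleting or duplicating a contiguous block that contains only letters already present both to its left and to its right cannot change which letter appears first, nor the order in which letters are first seen. In particular the initial mark (the letter whose first appearance is farthest to the right) is preserved, and so is the terminal mark. Next I would check that the \emph{initial} $U$ of $W$ — the maximal prefix with fewer than $|\var(W)|$ generators — transforms correctly: if $W \to \widetilde W$ by one idempotency rewrite, then their initials $U, \widetilde U$ are related by an idempotency rewrite as well (the rewrite either takes place entirely inside the initial segment, or entirely after the initial mark, because the initial mark itself is not moved or duplicated in a way that crosses the prefix boundary — here one uses that the duplicated/collapsed block $V$ has $\var(V)$ contained in the letters seen so far). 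A symmetric statement holds for terminals.

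Putting this together: suppose $W \sim W'$ in $X_{id}$. Then there is a finite sequence of rewrites, each either a relation of $X$ or an idempotency rewrite, transforming $W$ into $W'$. By the invariance observations, the initial marks of all intermediate words agree (idempotency rewrites preserve the initial mark outright; relations of $X$ preserve it by strong non-commutativity of $X$), so the initial marks of $W$ and $W'$ coincide; likewise the terminal marks. Moreover, tracking the initials along the sequence, each step induces either an $X$-relation step or an idempotency step between the corresponding initials, witnessing $U \sim U'$ in $X_{id}$; similarly $V \sim V'$ in $X_{id}$ for the terminals. This is precisely the condition of Definition~\ref{def:strong_non_commutativity}, so $X_{id}$ is strongly non-commutative.

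The main obstacle I anticipate is the bookkeeping in the middle step — showing that an idempotency rewrite applied to $W$ restricts to a well-defined idempotency rewrite (or identity) on the initial $U$. One must rule out the degenerate case where the rewritten block straddles the boundary between $U$ and the initial mark; this is handled by noting that $\var(V) \subseteq \var(U)$ for any square $V^2$ occurring within (or overlapping) the prefix up to and including the first initial-mark occurrence, so duplicating or collapsing $V$ does not change the location of the first occurrence of the initial mark and hence does not move the prefix boundary. Once that case analysis is pinned down, the rest is a routine induction on the length of the rewrite sequence.
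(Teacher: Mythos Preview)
Your approach is exactly the paper's: pass to a chain of single-step rewrites, note that $X$-steps preserve the required data by hypothesis, and do a case analysis on where an idempotency rewrite sits relative to the first occurrence of the initial mark. The structure is fine.

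There is, however, a small but genuine error in your case analysis. You claim the rewrite ``either takes place entirely inside the initial segment, or entirely after the initial mark,'' and justify this by asserting that a block $V$ overlapping the prefix satisfies $\var(V)\subseteq\var(U)$. That assertion is false: take $W=abab$ with initial mark $b$ and initial $U=a$; the block $V=ab$ straddles the boundary and $\var(V)=\{a,b\}\not\subseteq\var(U)=\{a\}$. So the straddling case genuinely occurs and must be handled, not ruled out. The fix is easy and is what the paper does: if the block $A$ contains (or lies to the right of) the first occurrence of the initial mark, then after replacing $A$ by $A^2$ the first occurrence of the initial mark is still in the first copy of $A$, at the same position relative to the prefix, so the initial $U$ is literally unchanged. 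Thus the paper's two cases are ``$A$ lies strictly left of the initial mark'' (then $U\to U'$ by the same idempotency rewrite) versus ``$A$ contains or lies right of the initial mark'' (then $U=U'$). With that correction your argument goes through and matches the paper's proof.
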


\begin{proof}
Suppose $W$ and $W'$ are words in the alphabet $\{x_1,\ldots, x_n\}$ and $W \sim W'$ in $X_{id}$. This means that there is a sequence $W_0,\ldots, W_k$ of words in the same alphabet such that $W=W_0$, $W'=W_k$ and for each $i$ either $W_i \sim W_{i+1}$ in $X$, or $W_{i+1}$ is obtained from $W_i$ by one application of the idempotency equivalence to some subword of $W_i$. Clearly, it is enough to check that the conditions of Definition~\ref{def:strong_non_commutativity} are satisfied in $X_{id}$ for each consecutive pair $W_i$ and $W_{i+1}$.

If $W_i \sim W_{i+1}$ in $X$, then the conditions of Definition~\ref{def:strong_non_commutativity} follows from the strong non-commutativity of $X$.

Suppose now that $W_{i+1}$ is obtained from $W_{i}$ by substituting some subword $A$ by $A^2$ (the symmetrical case is analyzed in the same way). We will show that initial marks of $W_i$ and $W_{i+1}$ are the same and $U_{i} \sim U_{i+1}$ in $X_{id}$, where $U_{i}$ and $U_{i+1}$ are initials of $W_i$ and $W_{i+1}$ respectively. For the terminals and terminal marks the proof is completely analogous.

Suppose $A$ lies to the left of initial mark in $W_i$ and we substitute $A$ by $A^2$. Then the initial mark is unaltered and in the initial $U_i$ we also substitute $A$ by $A^2$. Thus in this case $U_{i+1}$ is obtained from $U_i$ by idempotency relation.

Suppose $A$ contains initial mark of $W_i$ or lies to the right of it. Then after the substitution of $A$ by $A^2$ the initial mark is still the same and the initial $U_i$ also does not change.
\end{proof}

Now we outline the reduction of the lower bound in Theorem~\ref{thm:lowerbound} from idempotent semigroup to the general case.

Suppose $X$ is strongly non-commutative and suppose that for $X$ all dense operators can be computed by circuits of size at most $s$.

Consider a semigroup $X_{id}$ as introduced above. By Lemma~\ref{lem:idempotisation} $X_{id}$ is also strongly non-commutative. On the other hand, since $X_{id}$ is a factorization of $X$ any circuit computing dense operator over $X$ also computes the same dense operator over $X_{id}$. Thus, by our assumption there are circuits of size at most $s$ for all dense operators over $X_{id}$. Finally, $X_{id}$ is idempotent, so by the special case of our theorem we have $s = \Omega(n \alpha(n))$ and we are done.

\subsection{Reducing Dense Linear Operator to Range Queries} \label{sec:operators_to_queries}
In this subsection, we prove Lemma~\ref{lem:dense_matrices}. Intuitively, the lemma holds as the best way to compute rows of a~dense matrix is to combine input variables in the natural order. This intuition is formalized in Lemma~\ref{lemma:correctorder} below. Given this, it is easy to reduce dense linear operator problem to the range queries problem: we just ``pack'' each range query into a~separate row, i.e., for a~query $(l,r)$ we introduce a~$0/1$-row having two zeroes in positions $l-1$ and $r+1$ (hence, this row consists of three ranges: $(1,l-1)$, $(l,r)$, $(r+1,n)$). Then, if a~circuit computing the corresponding linear operator has a~nice property of always using the natural order of variables (guaranteed by Lemma~\ref{lemma:correctorder}), one may extract the answer to the query $(l,r)$ from it.

It should be mentioned, at the same time, that the semigroup $X$ might be complicated. In particular, the idempotency is tricky. For example, it can be used to simulate commutativity: one can turn $xy$ into $yx$, by first multiplying $xy$ by~$y$ from the left and then multiplying the result by $x$ from the right (obtaining $(y(xy))x=(yx)(yx)=yx$). Using similar ideas, one can place new variables inside of already computed products. To get $xyz$ from $xz$, one multiplies it by $xyz$ first from the left and then from the right: $(xyz)xz(xyz)=xy(zxzx)yz=xy(zx)yz=xyz$.
This is not extremely impressive, since to get $xyz$ we multiply by $xyz$, but the point is that this is possible in principle.

We proceed to the formal proofs. Let's call the word $W$ in the alphabet $\{x_1,\ldots,x_n\}$ \emph{increasing} if it is a product of variables in the increasing order. A~binary circuit is called an~{\em increasing circuit} if each of its gates computes a word equivalent in $X$ to increasing~word.
Note that if a~gate in an~increasing circuit is fed by two gates~$G$ and~$H$, then the increasing words computed by~$G$ and~$H$ are matching in a~sense that some suffix of~$G$ (possibly an empty suffix) is equal to some prefix of~$H$. Otherwise, the result is not equal to a product of variables in the increasing order, due to Lemma~\ref{lem:variables_order}.

Analogously, a~binary circuit is called a~{\em range circuit} if each of its gates computes a~word that is equivalent to a~range.

The proof of Lemma~\ref{lem:dense_matrices} follows from the following two lemmas.

\begin{lemma}\label{lemma:correctorder}
Given a~binary circuit computing~$Ax$, one may transform it into an~increasing circuit of the same size computing the same function.
\end{lemma}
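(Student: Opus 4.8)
The plan is to process the binary circuit gate by gate in topological order and maintain the invariant that every gate has been rewritten so that it computes a word equivalent in $X$ to an increasing word, without changing the word computed by any gate (up to equivalence in $X$) and without changing the size of the circuit. Since the original circuit computes $Ax$, every output gate computes a row of $A$, which is a sum of a subset of the variables; but the \emph{value} of such a gate in $X$ is some word $W$ whose letter-set is that subset — it need not be increasing. What I want to argue is that each gate can be ``normalized'' locally.

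First I would set up the normalization at a single gate. Suppose a gate $g$ has inputs from gates $G$ and $H$, and inductively $G$ computes a word equivalent to an increasing word $w_G$ and $H$ to an increasing word $w_H$. The gate $g$ then computes (a word equivalent to) $w_G \circ w_H$. I claim $w_G \circ w_H$ is itself equivalent in $X$ to an increasing word: by Lemma~\ref{lem:variables_order}, if $w_G \circ w_H$ were equivalent to an increasing word $w$, then reading off first occurrences left-to-right in $w_G\circ w_H$ must give a strictly increasing sequence of indices, which forces $\var(w_G)$ and $\var(w_H)$ to ``interleave'' only in the trivial way — concretely, $\max$ of the first block of new variables contributed reading $w_G$ then $w_H$ must be monotone. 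The subtle point is that a priori nothing stops the original circuit from producing a gate value like $x_3 x_1$; the resolution is that the circuit as a whole computes $Ax$, i.e.\ all \emph{output} gates are increasing (they are ranges-of-a-row, hence sums over subsets, which in the target are increasing words because that is what $Ax$ literally is), and then I push this property backwards. So the real structure of the argument is: (i) output gates already compute increasing words; (ii) if a gate $g$ computes an increasing word and is fed by $G,H$, show $G$ and $H$ can be replaced by gates computing increasing words $w_G', w_H'$ with $w_G' \circ w_H'$ still equivalent to what $g$ computed. Step (ii) is where Lemma~\ref{lem:variables_order} and the matching (suffix-of-$G$ equals prefix-of-$H$) remark from the text do the work: the increasing word at $g$ determines a unique way to split it into a prefix and suffix sharing an overlap, and one assigns these to $G$ and $H$.

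The key steps, in order, would be: (1) observe output gates compute increasing words since $Ax$ is by definition a vector of increasing words; (2) prove the local lemma — if the word at $g$ is (equivalent to) increasing and $g = G \circ H$, then there exist increasing words assignable to $G$ and $H$ whose product equals the word at $g$ in $X$, using the first-occurrence characterization of Lemma~\ref{lem:variables_order} together with the overlap/matching observation; (3) induct from outputs toward inputs in reverse topological order, at each step replacing the words computed at $G$ and $H$ (note a gate may feed several successors, so one must check the assignments forced by different successors are consistent — here idempotency and the fact that $X$ is strongly non-commutative, so first-occurrence order is an invariant of the equivalence class, guarantee the increasing word at a gate is \emph{uniquely determined} by the equivalence class of the word it computes, hence consistency is automatic); (4) note input gates already compute $x_i$, which is increasing, so the base case is fine, and the rewriting changes no wires, so the size is preserved.

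The main obstacle I expect is step (3)'s consistency issue and the precise statement of step (2): I must be careful that ``the increasing word equivalent to $W$'' is well-defined — two increasing words equivalent in $X$ need not be literally equal unless I know $X_{sym}$-type collapsing, but what I actually need is only that the \emph{pattern of first occurrences} (the sequence from Lemma~\ref{lem:variables_order}) and similarly last occurrences are determined, and then by the strong non-commutativity / Lemma~\ref{lem:prefix_equivalence} the splitting of that pattern at a gate is forced. So the heart of the proof is a careful bookkeeping argument showing that the ``increasing normal form'' propagates consistently backward through the DAG; the algebra (that $w_G \circ w_H$ with the right overlap equals the word at $g$ in $X$) is a direct application of the lemmas already proved and is not the hard part.
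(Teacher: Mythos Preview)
Your backward-induction plan has a real gap at step~(3). The claim that ``the increasing word at a gate is uniquely determined by the equivalence class of the word it computes'' is the linchpin of your consistency argument, but it does not hold: a non-output gate may compute a word that is not equivalent in $X$ to \emph{any} increasing word (e.g.\ $x_1x_2x_3x_1$ in a free idempotent semigroup), so there is no canonical increasing representative to appeal to. You must therefore genuinely change what such a gate computes, and your proposal gives no rule for doing so that survives the recursion.

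The natural candidate---assign to $G$ the product of $\var(W_G)$ in increasing order---works for one level (your step~(2) is essentially correct: if $W_G\circ W_H$ is equivalent to an increasing word then $\var(W_G)$ is an initial segment and $\var(W_H)$ a final segment of its variable set, and the corresponding increasing words overlap correctly) but fails at the next. Concretely, let $E$ compute $x_1x_2x_3$, let $G=E\circ x_1$ so that $W_G=x_1x_2x_3x_1$, and let the output be $g=G\circ E$; then $w_g=(x_1x_2x_3x_1)(x_1x_2x_3)=x_1x_2x_3$ by idempotency. Your rule assigns $w_G'=x_1x_2x_3$. Recursing on $G=E\circ x_1$, the same rule gives $w_E'=x_1x_2x_3$ and $w_F'=x_1$, but $w_E'\circ w_F'=x_1x_2x_3x_1\not\sim x_1x_2x_3=w_G'$. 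The only fix here is to drop the right predecessor's contribution entirely (assign it the empty word), but deciding when to do that depends on how the gate is used downstream, not on $\var(W_F)$ alone---so the ``uniqueness determined by the equivalence class'' heuristic cannot supply it.

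The paper's proof is organised quite differently. It processes gates in \emph{forward} topological order and does not attempt to preserve equivalence classes at internal gates. Instead it introduces a global notion of a variable $x_i$ being \emph{good} at $G$ (there is a path from $G$ to an output along which nothing $\ge x_i$ is ever multiplied on the left), and from this defines an interval $[\mmin_G,\mmax_G]$ of admissible increasing words for $G$. Two nontrivial claims about gap variables then show that for $G=F\circ H$ one can always choose $P_G$ in this interval compatibly with $P_F,P_H$; at outputs the interval collapses to the correct row of $Ax$. In the example above, $x_1$ is not good at the input $F=x_1$ (the only path from $F$ immediately left-multiplies by $E$, which contains $x_3\ge x_1$), so $P_F=\lambda$ and $P_G$ is simply set to $P_E$. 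This global goodness analysis is precisely the missing ingredient in your sketch.
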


\begin{lemma}\label{lemma:matrixranges}
Given an~increasing circuit computing~$Ax$, one may transform it into a~range circuit of the same size computing all ranges of~$A$.
\end{lemma}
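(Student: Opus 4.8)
The plan is to build the range circuit $C'$ by a single sweep over the given increasing circuit $C$, processing its gates in topological order and maintaining the invariant that for every gate $g$ of $C$ the circuit $C'$ already contains a gate for each \emph{maximal run} (maximal block of consecutive indices) of the set $S_g\subseteq\{1,\dots,n\}$ of variables occurring in the increasing word computed at $g$. Input gates are trivial, since a single variable is a range; and an output gate of $C$, which computes (an increasing word equivalent to) a row $R_i$ of $A$, then hands us exactly the maximal runs of $R_i$, i.e.\ the ranges of that row. Hence all ranges of $A$ end up in $C'$, and the whole content of the lemma is that the sweep creates only $O(|C|)$ new gates and that each of them is indeed a range.

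The inductive step uses the structural remark recorded after Lemma~\ref{lem:variables_order}: if $g=G\circ H$ then some suffix of the increasing word at $G$ equals some prefix of the increasing word at $H$, so as sets $S_g=S_G\uplus T$ with $T=S_H\cap(\max S_G,\,n]$ lying entirely to the right of $S_G$ (idempotency of $X$ is what lets us reduce the concatenated word to this increasing normal form). Consequently the maximal runs of $S_g$ are the maximal runs of $S_G$ followed by the maximal runs of $T$, except that the last run of $S_G$ and the first run of $T$ coalesce into a single run when they happen to be adjacent. The range gates for the runs of $S_G$ are reused verbatim; the runs of $T$ that are whole runs of $S_H$ already have gates in $C'$ and are reused; the one possible coalescence is realised with a single fresh binary gate, charged to $g$ (the case $g=H\circ G$ is the mirror image: $S_H$ is used whole and $S_G$ is truncated on the left). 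So far the sweep costs $O(1)$ per gate of $C$.

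The delicate point — and the main obstacle — is the \emph{first} run of $T$: it need not be a run of $S_H$, because $\max S_G$ may fall strictly inside a run $K$ of $S_H$, in which case we must produce a gate for the proper sub-interval $K\cap(\max S_G,\,n]$, and doing this naively blows the size up by a logarithmic factor, which is fatal for the intended $\Omega(n\alpha(n))$ bound. I plan to keep this $O(1)$ amortised per gate using two facts: (i) every gate of $C$ issues exactly one such truncation request, so there are only $O(|C|)$ of them; and (ii) every interval $[a,b]$ that ever occurs as a maximal run of some $S_g$ is born at a well-defined place — it is a single variable, or a coalesced run born at a unique gate, or a truncated run born at a unique gate — so the number of distinct ranges that have to be represented is $O(|C|)$. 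A truncation request is then threaded down the circuit: at a gate $H=H_L\circ H_R$, a request about a run of $S_H$ is forwarded to $H_L$ or $H_R$, crossing at most one coalescence point and paying at most one extra gate there, and the sub-requests it spawns keep the same cut point, so memoising by (gate, cut point) — of which there are $O(|C|)$ — together with a short charging argument keeps the number of created gates linear. Since the computed operator is dense we may assume every input appears in some row, so $|C|=\Omega(n)$ and the $O(n)$ additive slack from this bookkeeping is harmless.

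Finally one checks that $C'$ is genuinely a range circuit: every gate is an input, or a reuse/coalescence gate (a concatenation of two adjacent ranges), or a truncation gate (a sub-interval of a range) — in all cases a range. Combining Lemma~\ref{lemma:matrixranges} with Lemma~\ref{lemma:correctorder} then yields Lemma~\ref{lem:dense_matrices}: pack each query $(l,r)$ into a row with zeroes at $l-1$ and $r+1$, take a size-$s$ circuit for the resulting dense $Ax$, make it increasing by Lemma~\ref{lemma:correctorder}, convert it into a range circuit computing all ranges of $A$ by Lemma~\ref{lemma:matrixranges}, and read off the run $[l,r]$ of the corresponding row as the answer to the query. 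The step I expect to require the most care is precisely the truncation bookkeeping above — showing that the truncated sub-intervals, although not themselves values computed by $C$, can be produced and shared within a linear (rather than $O(|C|\log n)$) budget.
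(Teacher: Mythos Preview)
Your high-level plan matches the paper's --- process gates in topological order and maintain, for each gate, a list of range-gates for its maximal runs --- but you miss the observation that makes the merge step trivial and instead manufacture an obstacle. You quote the matching condition (a suffix of the increasing word at $G$ equals a prefix of that at $H$ when $g=G\circ H$) but then use only its weak consequence $S_g=S_G\uplus T$ with $T=S_H\cap(\max S_G,n]$. Its full strength is that on the whole interval $[\min S_H,\max S_G]$ the sets $S_G$ and $S_H$ \emph{coincide}: applying Lemma~\ref{lem:variables_order} from both ends gives $S_H\setminus S_G\subseteq(\max S_G,n]$ and $S_G\setminus S_H\subseteq[1,\min S_H)$. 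Hence if a run $K=[a,b]$ of $S_H$ straddles $\max S_G$, its left part $[a,\max S_G]$ already lies inside the last run $[c,\max S_G]$ of $S_G$, and the merged run $[c,b]$ of $S_g$ is simply the product of the already-available gates for $[c,\max S_G]$ and $[a,b]$ (idempotency absorbs the overlap). Every other run of $S_H$ is either entirely $\le\max S_G$ (then contained in a run of $S_G$, contributing nothing new) or entirely $>\max S_G$ (reused verbatim). No truncated interval $K\cap(\max S_G,n]$ is ever needed, and the paper's proof is one line: at most one new gate per gate of the original circuit, so the range circuit has the same size.

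The truncation bookkeeping you flag as ``the main obstacle'' is therefore an artefact of not exploiting the matching condition fully. As sketched it is also not a proof: the claim that only $O(|C|)$ distinct (gate, cut point) pairs arise when threading requests downward is unsupported, and the promised ``short charging argument'' is never given.
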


\begin{proof}[Proof of Lemma~\ref{lem:dense_matrices}]
Given $n$~ranges, pack them into a~matrix $A \in \{0,1\}^{n \times n}$ with at most $2n$ zeroes. Take a size-$s$ circuit computing $Ax$ and convert it into a~binary circuit. Then, transform it into an~increasing circuit using Lemma~\ref{lemma:correctorder}. Finally, extract the answers to all the ranges from this circuit using Lemma~\ref{lemma:matrixranges}.
\end{proof}

Note that the second statement of Theorem~\ref{thm:lowerbound} follows since the proof of Lemma~\ref{lem:dense_matrices} deals with matrices with exactly two zeroes in every row.

\begin{proof}[Proof of Lemma~\ref{lemma:matrixranges}]
Take an~increasing circuit ${\cal C}$ computing $Ax$ and process all its gates in some topological ordering. Each gate $G$ of ${\cal C}$ computes a (word that is equivalent to an) increasing word. We split this increasing word into ranges and we put into correspondence to $G$ an ordered sequence $G_1,\ldots, G_k$ of gates of the new circuit. Each of this gates compute one of the ranges of the word computed by $G$ and $G \sim G_1\circ\ldots \circ G_k$.

Consider a gate $G$ of ${\cal C}$ and suppose we have already computed all gates of the new circuit corresponding to previous gates of ${\cal C}$. $G$ is the product $F \circ H$ of previous gates of ${\cal C}$, for which new range gates are already computed. Since ${\cal C}$ is increasing we have that $F$ and $H$ are matching, that is some suffix (maybe empty) of the increasing word computed in $F$ is equal to some prefix (maybe empty) of the increasing word computed in $H$ and there are no other common variables in these increasing words. It is easy to see that ranges for the sequence corresponding to $G$ are just the ranges for the sequences for $F$ and $H$ with possibly two of them united. If needed, we compute the product of gates of the new circuit corresponding to the united ranges and the sequence of new gates for $G$ is ready.

Thus, to process each gate of ${\cal C}$ we need at most one operation in the new circuit and the size of the new circuit is at most the size of ${\cal C}$.

For output gates of ${\cal C}$ we have gates in the new circuit that compute exactly ranges of output gates. Thus, in the new circuit all ranges of $A$ are computed.
\end{proof}

\begin{proof}[Proof of Lemma~\ref{lemma:correctorder}]
Consider a~binary circuit~${\cal C}$ computing~$Ax$ and its gate~$G$ together with a~variable~$x_i$ it depends on.
We say that $x_i$ is \emph{good} in~$G$ if there is
a~path in~${\cal C}$ from $G$ to an output gate, on which the word is never multiplied from the left by words containing variables greater than or equal to $x_i$.
Note that if $x_i$ and $x_{i'}$ are both contained in $G$, $i<i'$, and $x_i$ is good in~$G$, then $x_{i'}$ is good in~$G$, too. That is, the set of all good variables in~$G$ is closed upwards.

Consider the largest good variable in $G$ (if there is one), denote it by $x_k$ ($x_k$ is actually just the largest variable in~$G$, unless of course there are no good variables in $G$). Let us focus on the first occurrence of $x_k$ in~$G$.

\begin{claim}
All first occurrences of other good variables in~$G$ must be to the left of the first occurrence of $x_k$.
\end{claim}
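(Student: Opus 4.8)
The plan is to argue by contradiction: suppose some good variable $x_j$ (with $j \neq k$) has its first occurrence in $G$ to the right of the first occurrence of $x_k$. Since $x_k$ is the \emph{largest} good variable in $G$, we have $j < k$. Now I would use the definition of ``good'': because $x_j$ is good in $G$, there is a path $\pi$ from $G$ to an output gate along which the accumulated word is never multiplied from the left by a word containing a variable $\ge x_j$. I want to trace what happens to the relative order of the first occurrences of $x_j$ and $x_k$ along this path, and derive a violation of the increasing structure forced by Lemma~\ref{lem:variables_order}.

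The key observation is that when we move along a wire in $\pi$, the word at the current gate becomes a factor of the word at the next gate, and the only new material is appended on the left or on the right. Multiplication on the right by anything is harmless for the relative order of $x_j$ and $x_k$. Multiplication on the left is restricted by goodness of $x_j$: the left factor contains no variable $\ge x_j$, hence in particular it does \emph{not} contain $x_k$ (as $k > j$) and it does \emph{not} contain $x_j$ either. Therefore, reading the evolving word from left to right, the first occurrence of $x_k$ stays strictly to the left of the first occurrence of $x_j$ throughout the whole path $\pi$ — neither right-multiplication nor the permitted left-multiplications can introduce a new $x_j$ before the existing $x_k$, nor a new $x_k$; and they cannot delete letters (the circuit only concatenates). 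Hence at the output gate reached by $\pi$, the first occurrence of $x_k$ precedes the first occurrence of $x_j$.

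But the output gate computes a row of $Ax$, i.e.\ (a word equivalent to) $\sum_{A_{i\ell}=1} x_\ell$ in increasing order, so in the canonical left-to-right reading the first occurrence of $x_j$ must precede that of $x_k$ since $j < k$. By Lemma~\ref{lem:variables_order}, the left-to-right order of first occurrences is an invariant of the equivalence class in $X$, so the word computed at that output gate must have $x_j$ before $x_k$ — contradicting the previous paragraph. This contradiction shows no such $x_j$ exists, i.e.\ all first occurrences of good variables other than $x_k$ lie to the left of the first occurrence of $x_k$ in $G$.

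The main obstacle I anticipate is making the ``tracing along the path'' step fully rigorous: I need to set up a clean invariant — something like ``in the increasing word computed at the current gate, $x_k$ appears and its first occurrence is strictly left of any occurrence of a variable $\le x_j$ that came from below'' — and verify it is preserved under both a left-multiplication by a word avoiding variables $\ge x_j$ and an arbitrary right-multiplication, being careful that the gate on the path may also receive contributions from the \emph{other} input wire (which is exactly the left/right factor), and that we should work with the canonical increasing representative throughout (justified because every gate in the circuit, after the reduction we are in the middle of proving, or at least every gate we care about, computes a word equivalent to an increasing one). I would also need to handle the degenerate cases where $x_j$ or $x_k$ fails to appear at an intermediate gate, but these only help, since a variable can only (re)appear later, never earlier.
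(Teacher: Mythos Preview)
Your argument is correct and is essentially the paper's proof: the paper compresses your ``tracing along the path'' into the single observation that the output word literally equals $LGR$ with every variable of $L$ strictly smaller than $x_j$, and then applies Lemma~\ref{lem:variables_order} to the equivalence $H\sim LGR$. The obstacle you anticipate is not real --- you do not need increasing representatives at intermediate gates (indeed, this claim is proved for the \emph{original} circuit, before it has been made increasing), because the decomposition $H=LGR$ holds on the level of words, not merely up to equivalence.
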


\begin{proof}
Suppose that a~good variable $x_i$ has the first occurrence to the right of (the first occurrence of) $x_k$. Consider an output gate $H$ such that there is a~path from~$G$ to~$H$ and along this path there are no multiplications of $G$ from the left by words containing variables greater than~$x_i$. Then we have $H \sim LGR$, where all variables of~$L$ are smaller then~$x_i$. Then in $H$ the variable $x_i$ appears before $x_k$ when we read from left to right, but at the same time we have that $x_k$ appears before $x_i$ in $LGR$. This contradicts Lemma~\ref{lem:variables_order}.
\end{proof}

Now, for a~gate~$G$, define two words $\mmin_G$ and $\mmax_G$. Both these words are products of variables in the increasing order: $\mmin_G$ is the product of good variables of $G$ in the increasing order, $\mmax_G$ is the product (in the increasing order) of all variables that has first occurrences before (the first occurrence of) $x_k$. Note that $\mmin_G$ is
a~suffix of $\mmax_G$. If there are no good variables in $G$ we just let $\mmin_g=\mmax_g=\lambda$ (the empty word).
For the word~$W$ that has the form of the product of variables in the increasing order, we call $x_j$ a~\emph{gap variable} if it is not contained in $W$
while $W$~contains variables $x_i$ and $x_k$ with $i < j < k$.

Below we show how for a~given circuit~${\cal C}$ to construct an increasing circuit~${\cal C}'$ that for each gate~$G$ of~${\cal C}$ computes some intermediate product $P_G$ between $\mmin_G$ and $\mmax_G$: $\mmin_g$ is a~suffix of $P_G$ and $P_G$ is a~suffix of $\mmax_g$. The size of~${\cal C}'$ is at most the size of~${\cal C}$. For an output gate~$G$, $\mmin_g=\mmax_g=g$ hence the circuit ${\cal C}'$ computes the correct outputs.

To construct ${\cal C}'$, we process the gates of~${\cal C}$ in a~topological ordering. If $G$~is an input gate, everything is straightforward: in this case $\mmax_G=\mmin_G$ is either $\lambda$ or $x_j$. Assume now that $G$ is an internal gate with predecessors~$F$ and~$H$.
Consider the set of good variables in~$G$. If there are none, we let $P_G=\lambda$. If all first occurrences of good variables of $G$ are lying in one of the predecessors ($F$~and~$H$), then they are good in the corresponding input gate. We then set $P_G$ to $P_F$ or $P_H$.

The only remaining case is that some good variables have their first occurrence in~$F$ while some others have their first occurrence in~$H$. Then the largest variable $x_k$ of~$G$ has the first occurrence in $H$ and all variables of~$F$ are smaller than~$x_k$.

\begin{claim} \label{cl: h is good}
There are no gap variables for $\mmax_H$ in~$F$.
\end{claim}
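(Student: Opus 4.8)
The plan is to argue by contradiction: suppose $x_j$ is a gap variable for $\mmax_H$ lying in $F$, so that $x_j\in\var(F)$, $x_j\notin\var(\mmax_H)$, and there are variables $x_a,x_b\in\var(\mmax_H)$ with $a<j<b$. Before doing anything else I would pin down what $\mmax_H$ is: since we are in the case where $G$ has good variables, $x_k$ is the largest variable of $G$, it lies in $\var(H)$, and every variable of $F$ is smaller than $x_k$. Prepending the wire $H\to G$ to a path witnessing that $x_k$ is good in $G$ shows that $x_k$ is good in $H$ as well (along $H\to G$ the word of $H$ is only multiplied from the left by $F$, and $\var(F)$ contains nothing $\ge x_k$; along the rest of the path the word is never left-multiplied by anything $\ge x_k$, and prepending $F$ does not change this). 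Moreover $x_k$ is the largest variable of $H$ because $\var(H)\subseteq\var(G)$ and $x_k$ is largest in $\var(G)$. Hence $x_k$ is the largest good variable of $H$, and $\mmax_H$ is exactly the increasing product of the variables of $H$ whose first occurrence in $H$ precedes the first occurrence of $x_k$.

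The core of the proof is to show that $x_j\in\var(H)$. Suppose not. Since the word computed at $G$ is the concatenation of the word of $F$ followed by the word of $H$, and $x_j$ occurs in $F$ but not in $H$, every occurrence of $x_j$ in $G$ lies in the $F$-part; in particular the rightmost occurrence of $x_j$ in $G$ lies in the $F$-part. On the other hand $x_a\in\var(\mmax_H)\subseteq\var(H)$, so $x_a$ occurs in the $H$-part of $G$, which sits entirely to the right of the $F$-part; thus the rightmost occurrence of $x_a$ in $G$ lies strictly to the right of the rightmost occurrence of $x_j$. Now read $G$ from right to left: by Lemma~\ref{lem:variables_order} the order of first appearances is the same as in the canonical increasing word equivalent to $G$ read backwards, i.e.\ the decreasing order of indices. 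Since $a<j$, the letter $x_j$ appears (scanning from the right) before $x_a$, so the rightmost occurrence of $x_j$ in $G$ must be to the right of that of $x_a$ — contradicting the previous sentence. Therefore $x_j\in\var(H)$.

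To conclude: $x_j\in\var(F)$, so $x_j<x_k$; and $x_j,x_k\in\var(H)$ with the word of $H$ equivalent to an increasing word, so by Lemma~\ref{lem:variables_order} (reading left to right) the first occurrence of $x_j$ in $H$ precedes the first occurrence of $x_k$ in $H$. By the description of $\mmax_H$ above, this forces $x_j\in\var(\mmax_H)$, contradicting the assumption. Hence no such gap variable exists, which proves the claim. I expect the delicate point to be the middle step: one must be careful to extract from Lemma~\ref{lem:variables_order} the statement about \emph{rightmost} occurrences and to combine it cleanly with the left-to-right concatenation structure $G=F\circ H$; the identification of $x_k$ with the largest good variable of $H$ is routine but should be written out, since the whole argument hinges on $\mmax_H$ being the object just described.
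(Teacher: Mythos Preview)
Your argument has a genuine gap. Both key steps---showing $x_j\in\var(H)$ in the middle paragraph, and then deducing $x_j\in\var(\mmax_H)$ in the last---invoke Lemma~\ref{lem:variables_order} by comparing the word computed at $G$ (respectively $H$) to ``the canonical increasing word equivalent to $G$'' (respectively $H$). But at this point in the proof of Lemma~\ref{lemma:correctorder} the circuit ${\cal C}$ is an \emph{arbitrary} binary circuit computing $Ax$; there is no reason the intermediate gates $G$ or $H$ compute words equivalent to increasing words. Only the \emph{output} gates are guaranteed to do so, since they compute rows of~$A$. So both applications of Lemma~\ref{lem:variables_order} are unjustified, and the argument collapses. (Your instinct that the delicate point is extracting rightmost-occurrence information from Lemma~\ref{lem:variables_order} is misdirected: the real issue is that you have no second word to compare $G$ or $H$ against.)

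The paper's proof sidesteps this by routing everything through an output gate. Since $x_k$ is good in $G$, there is an output $C$ with $C\sim LGR$ where all variables of $L$ are smaller than $x_k$. Now $C$ \emph{is} (equivalent to) an increasing word, so one may legitimately compare: take the prefix $P$ of $C$ preceding $x_k$ and the corresponding prefix $Q$ of $LGR$; Lemma~\ref{lem:prefix_equivalence} gives $P\sim Q$, and then the right-to-left order of first appearances in $P$ versus $Q$ produces the contradiction (in $P$ the order is decreasing, while in $Q$ the alleged gap variable $x_i$ sits out of place inside the $F$-part). Your approach could probably be salvaged by transplanting it to the output gate in this manner, but as written it assumes exactly the structure that Lemma~\ref{lemma:correctorder} is in the process of building.
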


\begin{proof}
Suppose that some variable $x_i$ in~$F$ is a~gap variable for $\mmax_H$. Consider an output $C$ such that there is a path from~$G$ to~$C$ and along this path there are no multiplications of $G$ from the left by words containing variables greater than~$x_k$. Then we have $C \sim LGR$ where all variables of $L$ are smaller then~$x_k$. Consider the prefix $P$ of $C$ preceding the variable~$x_k$ and the prefix~$Q$ of $LG$ preceding the variable $x_k$.
Then by Lemma~\ref{lem:prefix_equivalence} we have $P \sim Q$. But then the variables of~$P$ and~$Q$ appear in the same order if we read the words from right to left. But this is not true (the variable in~$P$ are in the decreasing order and in~$Q$ the variable $x_i$ is not on its place), a~contradiction.
\end{proof}

\begin{claim}\label{cl: f is good}
There are no gap variables for $\mmax_F$ in~$H$.
\end{claim}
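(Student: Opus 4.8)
\textbf{Proof proposal for Claim~\ref{cl: f is good}.}

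The plan is to mirror the argument used for Claim~\ref{cl: h is good}, swapping the roles of $F$ and $H$ and swapping ``left'' for ``right'' and ``prefix'' for ``suffix'' throughout. Recall the situation: $G = F \circ H$, some good variables of $G$ first occur in $F$ and some first occur in $H$, the largest variable $x_k$ of $G$ first occurs in $H$, and every variable of $F$ is strictly smaller than $x_k$. We want to rule out the existence of a variable $x_j$ appearing in $H$ that is a gap variable for $\mmax_F$, i.e.\ $x_j \notin \mmax_F$ but $\mmax_F$ contains variables $x_i, x_{i'}$ with $i < j < i'$.

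First I would pick a suitable output gate to work with. Since some good variable $x_m$ of $G$ first occurs in $F$, by definition of ``good'' there is a path in ${\cal C}$ from $G$ to an output gate $C$ along which $G$ is never multiplied from the left by a word containing a variable $\geq x_m$. Because the first occurrences of good variables in $G$ that lie in $F$ are, by the Claim preceding these two claims, all to the left of the first occurrence of $x_k$, and $\mmax_F$ consists exactly of the variables whose first occurrence in $F$ precedes that of $x_k$ — here one has to be a little careful, but the upshot is that along this path $G$ is never multiplied from the left by any variable that interferes with $\mmax_F$; concretely, $C \sim L G R$ where all variables of $L$ are smaller than the relevant threshold. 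The hard part, and the step I expect to cause the most friction, is pinning down exactly which threshold makes both ``$L$ has only small variables'' and ``the relevant prefix/suffix of $C$ is governed by $\mmax_F$'' simultaneously true; this is the analogue of the sentence ``all variables of $L$ are smaller than $x_k$'' in Claim~\ref{cl: h is good}, and it relies on the structural claim that the largest variable of $G$ sits in $H$, so that $\mmax_F$ lives entirely below $x_k$.

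Having fixed $C \sim L G R$, I would apply Lemma~\ref{lem:prefix_equivalence}, but now to \emph{suffixes} rather than prefixes (the lemma is stated for both). The point is that the terminal portion of $C$ governing the variables of $\mmax_F$ (which, since $x_k \in H$ and everything in $F$ is below $x_k$, is a suffix of $C$ determined by where the variables of $F$ first appear reading from the right) must be equivalent in $X$ to the corresponding suffix of $GR$, hence of $HR$ extended by $F$ on the left. Reading these equivalent words from left to right, Lemma~\ref{lem:variables_order} forces the order of first occurrences of variables to agree. But in one word the variables of $\mmax_F$ appear in increasing order with $x_j$ absent from its ``slot,'' while in the other the presence of $x_j$ (from $H$, sitting between $x_i$ and $x_{i'}$ in value) together with $x_k \in H$ disrupts that order — giving the contradiction. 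I would close by noting that the two claims together show that $\mmax_F$ and $\mmax_H$ ``interleave cleanly'': there are no gap variables of one lying in the other, which is precisely what is needed to merge $P_F$ and $P_H$ into a single increasing word $P_G$ using at most one extra gate, completing the inductive construction of ${\cal C}'$.
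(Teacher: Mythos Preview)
Your plan to ``mirror'' Claim~\ref{cl: h is good} by swapping prefixes for suffixes is the wrong symmetry, and the gap you yourself flag (``pinning down exactly which threshold\ldots'') is not resolvable along those lines. The definition of \emph{good} is asymmetric: it constrains only what is multiplied from the \emph{left} along the path to the output, so in $C\sim LGR$ you control $\var(L)$ but have no control whatsoever over $R$. Any suffix of $LGR$ long enough to reach back into $F$ (where $\mmax_F$ lives) must pass through all of $R$, and $R$ can contain arbitrary variables in arbitrary order; Lemma~\ref{lem:variables_order} applied to such a suffix tells you nothing useful about the relative order of the gap variable $x_j\in H$ and the variables of $\mmax_F$. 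Your sketch of the contradiction (``in one word\ldots in the other the presence of $x_j$\ldots disrupts that order'') never names the suffix precisely, and indeed there is no choice that works.

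The paper's proof does \emph{not} switch to suffixes. It stays with prefixes, but changes the threshold: instead of $x_k$ (the largest variable of $G$, used in Claim~\ref{cl: h is good}) it uses $x_l$, the largest variable of $F$. The point is that $x_l$ is itself good in $G$ --- since some good variable of $G$ first occurs in $F$ and goodness is closed upwards --- so there is an output $C\sim LGR$ with every variable of $L$ strictly below $x_l$. Now take the prefixes $P$ of $C$ and $Q$ of $LGR$ preceding the first occurrence of $x_l$; Lemma~\ref{lem:prefix_equivalence} gives $P\sim Q$. Since the first $x_l$ in $LGR$ lies inside $F$, we have $Q=LF''$ where $F''$ is the part of $F$ before $x_l$, and $\var(F'')=\var(\mmax_F)\setminus\{x_l\}$. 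The gap variable $x_j$ satisfies $j<l$, so $x_j\in\var(P)$; but $x_j\notin\mmax_F$ forces $x_j\notin\var(F'')$, hence $x_j$ occurs in $Q$ only inside $L$. Taking any $x_a\in\mmax_F$ with $a<j$ (one exists by the gap condition), we get that reading $Q$ from right to left the first appearance of $x_a$ (inside $F''$) precedes that of $x_j$ (inside $L$), while reading $P$ from right to left the order is decreasing, so $x_j$ precedes $x_a$. This contradicts Lemma~\ref{lem:variables_order}. The crucial move you are missing is the choice of $x_l$ as the cut point, which keeps the argument on the prefix side where $L$ is controlled.
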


\begin{proof}
Suppose that a~variable $x_i$ in~$H$ is a~gap variable for $\mmax_F$. Consider an output~$C$ such that there is a~path from~$G$ to~$C$ and along this path there are no multiplications of~$G$ from the left by words containing variables greater than $x_l$, the largest variable of~$F$. Then we have $C \sim LGR$, where all variables of~$L$ are smaller then $x_l$. Consider the prefix~$P$ of~$C$ preceding $x_l$ and the prefix $Q$ of $LG$ preceding $x_l$.
Then by Lemma~\ref{lem:prefix_equivalence} we have $P \sim Q$. But then the variables of~$P$ and~$Q$ appear in the same order if we read the words from right to left. But this is not true (the variables in~$P$ are in the decreasing order and in~$Q$ the variable $x_i$ is not on its place), a~contradiction.
\end{proof}

We are now ready to complete the proof of Lemma~\ref{lemma:correctorder}.
Consider $P_F$ and $P_H$. By Claims~\ref{cl: h is good} and~\ref{cl: f is good}, we know that they are ranges in the same sequence of variables $\var(P_F)\cup \var(P_H)$. We know that the largest variables of $P_H$ is greater than all variables of $P_f$. Then either $P_F$ is contained in $P_H$, and then we can let $P_G=P_H$ (it contains all good variables of~$G$), or we have $P_F =PQ$ and $P_H=QR$ for some words $P, Q, R$. In this case we let $P_G = P_F \circ P_H = PQQR=PQR$. Clearly, $\mmin_G$ is the suffix of $P_G$ and $P_G$ itself is the suffix of $\mmax_G$.
\end{proof}

\subsection{Reducing Non-commutative Range Queries to Commutative Range Queries} \label{sec:non-commutative_to_commutative}

In this subsection we prove Lemma~\ref{lem:intervals}.

\begin{proof}[Proof of Lemma~\ref{lem:intervals}]
We will show that any computation of ranges over $X_{sym}$ can be reconstructed without increase in the number of gates in such a way that each gate computes a range (recall, that we call this a range circuit). It is easy to see that then this circuit can be reconstructed as a circuit over $X$ each gate of which computes the same range with the variables in the increasing order. Indeed, we need to make sure that each gate computes a range in such a way that all variables are in the increasing order and this is easy to do by induction. Each gate computes a product of two ranges $a$ and $b$. If one of them is contained in the other, we simplify the circuit, since the gate just computes the same range as one of its inputs (due to idempotency and commutativity). It is impossible that $a$ and $b$ are non-intersecting and have a gap between them, since then our gate does not compute a range (in a range circuit). So, if $a$ and  $b$ are non-intersecting, then they are consecutive and we just need to multiply them in the right order. If the ranges are intersecting, we just multiply then in the right order and apply idempotency.

Thus it remains to show that each circuit for range query problem over $X_{sym}$ can be reconstructed into a range circuit. For this we will need some notation.

Suppose we have some circuit ${\cal C}$. For each gate $G$ denote by $\lef(G)$ the smallest index of the variable in $G$ (the leftmost variable). Analogously denote by $\righ(G)$ the largest index of the variable in $G$. Denote by $\gap(G)$ the smallest $i$ such that $x_i$ is not in $G$, but there are some $j,k$ such that $j<i<k$ and $x_j$ and $x_k$ (the smallest index of the variable that is in the gap in $G$).
Next, fix some topological ordering of gates in ${\cal C}$ (the ordering should be proper, that is inputs to any gate should have smaller numbers). Denote by $\num(G)$ the number of a gate in this ordering. Finally, by $\out(G)$ denote the out-degree of $G$.

For each gate that computes a non-range consider the tuple
$$
\tup(G)=(\lef(G),\gap(G),\num(G),-\out(G)).
$$ For the circuit ${\cal C}$ consider $\tup({\cal C}) = \min_G \tup(G)$, where the minimum is considered in the lexicographic order and is taken over all non-range gates. If there are no non-range gates we let $\tup({\cal C})=\infty$. This is our semi-invariant, we will show that if we have a circuits that is not a range circuit, we can reconstruct it to increase  its $\tup$ (in the lexicographic order) without increasing its size. Since $\tup$ ranges over a finite set, we can reconstruct the circuit repeatedly and end up with a range circuit.

Now we are ready to describe a reconstruction of a circuit. Consider a circuit ${\cal C}$ that is not a range circuit. And consider a gate $G$ such that $\tup(G)=\tup({\cal C})$ (it is clearly unique). Denote by $A$ and $B$ two inputs of $G$. Let $i=\lef(G)$ and $j=\gap(G)$, that is $x_i$ is the variable with the smallest index in $G$ and $x_j$ is the first gap variable of $G$ (it is not contained in $G$).

The variable $x_i$ is contained in at least one of $A$ and $B$. Consider the gate among $A$ and $B$ that contains $x_i$. This gate cannot have $x_j$ or earlier variable as a gap variable: it would contradict minimality of $G$ (by the second or the third coordinate of $\tup$). Thus this gate is a range $[x_i,x_{j'})$ for some $j'\leq j$ (by this we denote the product of variables from $x_i$ to $x_{j'}$ excluding $x_{j'}$). In particular, only one of $A$ and $B$ contains $x_i$: otherwise they are both ranges and $x_j$ is not a gap variable for $G$.

From now on we assume that $A$ contains $x_i$, that is $A=[x_i,x_{j'})$.

Now we consider all gates $H_1,\ldots, H_k$ that have edges leading from $G$. Denote by $F_1,\ldots, F_k$ their other inputs. If $k$ is equal to $0$, we can remove $G$ and reduce the circuit. Now we consider cases.

\begin{center}
\begin{tikzpicture}
\foreach \x/\y/\n/\t in {0/4/f1/F_1, 2/4/fl/F_l, 4/4/fk/F_k, 6/4/g/G, 1/2/h1/H_1, 3/2/hl/H_l, 5/2/hk/H_k, 5/6/a/A, 7/6/b/B}
  \node[inner sep=0mm,circle,draw,minimum size=6mm] (\n) at (\x,\y) {$\t$};
\foreach \x/\y/\n/\t in {1/4/fdots1/\ldots, 3/4/fdots2/\ldots, 2/2/hdots1/\ldots, 4/2/hdots2/\ldots}
  \node[inner sep=0mm] (\n) at (\x,\y) {$\t$};
\foreach \s/\t in {f1/h1, fl/hl, fk/hk, a/g, b/g, g/h1, g/hl, g/hk}
  \draw[->] (\s) -- (\t);

\end{tikzpicture}
\end{center}

\emph{Case 1.} Suppose that there is $l$ such that $\lef(F_l) \leq \lef(G)$. If $\lef(F_l) < \lef(G)$, then $F_l$ must contain all variables $x_i, \ldots, x_j$, since otherwise either $F_l$ or $H_l$ will have smaller $\tup$ then $G$. Thus $F_l$ contains $A$. Then, we can restructure the circuit by feeding $B$ to $H_l$ instead of $G$. This does not change the value of the gate computed by $H_l$ and reduces $\out(G)$. Thus $\tup({\cal C})$ increases and we are done.

If $\lef(F_l) = \lef(G)$, then $F_l$ still cannot have gap variables among $x_i, \ldots, x_{j-1}$ as it would contradict the minimality of $G$. Thus, $F_l$ is either a range, or it is not a range, but contain all variables $x_i, \ldots, x_{j-1}$. In the latter case again $F_l$ contains $A$. In the former case $F_l$ either contains $A$, or is contain in $G$. If $F_l$ contains $A$, we can again simplify the circuit as above. If $F_l$ is contained in $G$, we have $G=H_l$, so we can remove $H_l$ from the circuit and reduce the size of the circuit.

\emph{Case 2.} Suppose that for all $l$ we have $\lef(F_l)>\lef(G)$. Consider $l$ such that $F_l$ has the minimal $\righ(F_l)$ (if there are several such $l$ pick among them the one with the minimal $\num(F_l)$). For convenience of notation let $l=k$. Now we restructure the circuit in the following way. We feed $F_k$ to $G$ instead of $A$. We feed $A$ to $H_k$ instead of $F_k$. We feed $H_k$ to all other $H_p$'s instead of $G$.

\begin{center}
\begin{tikzpicture}
\foreach \x/\y/\n/\t in {-2/4/f1/F_1, 2/4/fk1/F_{k-1}, 3/6/fk/F_k, 6/4/g/G, 1/1/h1/H_1, 5/1/hk1/H_{k-1}, 4/3.5/hk/H_k, 5/6/a/A, 7/6/b/B}
  \node[inner sep=0mm,circle,draw,minimum size=10mm] (\n) at (\x,\y) {$\t$};
\foreach \x/\y/\n/\t in {0/4/fdots1/\ldots, 3/1/hdots1/\ldots}
  \node[inner sep=0mm] (\n) at (\x,\y) {$\t$};
\foreach \s/\t in {f1/h1, fk1/hk1, b/g, g/hk}
  \draw[->] (\s) -- (\t);
\foreach \s/\t in {fk/hk, a/g, g/h1, g/hk1}
  \draw[->,dashed] (\s) -- (\t);
\foreach \s/\t in {fk/g, a/hk, hk/h1, hk/hk1}
  \draw[->,line width=0.5mm] (\s) -- (\t);

\end{tikzpicture}
\end{center}

Observe that all these reconstructions are valid, that is, they do not create directed cycles in the circuit. To verify this we need to check that there are no cycles using new edges. Indeed, there cannot be a cycle going through one of the edges $(H_k,H_p)$ since this would mean that there was a directed path from $H_p$ to one of the vertices $F_k$, $A$ and $G$ on the original circuit. Such a path to $A$ or $G$ would mean a cycle in the original circuit. Such a path to $F_k$ violates the minimality property of $F_k$ (minimal $\righ(F_k)$). Next, there cannot be a cycle going through both edges $(F_k,G)$ and $(A,H_k)$, since substituting these edges by $(F_k,H_k)$ and $(A,G)$ we obtain one or two cycles in the original circuit. Next, there cannot be a cycle going through the edge $(A,H_k)$ only, since $H_k$ is reachable from $A$ in the original circuit and this would mean a cycle in the original circuit. Finally, there cannot be a cycle going only through the edge $(F_k,G)$ since this would mean a directed path from $G$ to $F_k$ in the original circuit and this contradicts $\lef(F_k)>\lef(G)$.

Note that our reconstruction might require reordering of the circuit gates, since we create edges between previously incomparable $H$-gates and between $F_k$ and $G$. But the reordering affect only the gates with $\num$ greater than $\num(G)$ and may only reduce $\num(F_k)$ to be smaller than $\num(G)$. But this can only increase $\tup(G)$ and since $\lef(F_k)>\lef(G)$ this can only increase $\tup({\cal C})$.

Observe, that the circuit still computes the outputs correctly. The changes are in the gates $H_1\ldots, H_k$ (and also in $G$, but $H_1,\ldots, H_k$ are all of its outputs). $H_k$ does not change. Other $H_p$'s might have changed, they now additionally include variables of $F_k$. But note that all of these variables are in between of $\lef(H_p)$ and $\righ(H_p)$, so they must be presented in the output gates connected to $H_p$ anyway (recall that at the output gates we compute ranges).

Now, observe that $\tup(G)$ has increased (by the first coordinate). There are no new gates with smaller $\lef$. Among gates with the minimal $\lef$ there are no new gates with smaller $\gap$. Among gates with minimal $(\lef,\gap)$ all gates have larger $\num$ then $G$. Thus $\tup({\cal C})$ increased and we are done.
\end{proof}

\section{Open Problems}
There are two natural problems left open.
\begin{enumerate}
\item Design a~deterministic $O(z)$ time algorithm for generating
a~circuit in the commutative case.
For this, it suffices to design an $O(n)$ deterministic algorithm for the following problem: given a~list of positions
of $n$~zeroes of an $n \times n$ 0/1-matrix with at most $\log n$ zeroes in every row, permute its columns so that the total length of all segments of length at most $O(\log n)$ is $O(\frac{n}{\log n})$.
\item Determine the asymptotic complexity of the linear operator in terms of the number of zeroes in the non-commutative case.
\end{enumerate}

\section*{Acknowledgments}
We thank Paweł Gawrychowski for pointing us out to the paper~\cite{DBLP:journals/ijcga/ChazelleR91}. We thank Alexey Talambutsa for fruitful discussions on the theory of semigroups.

\bibliographystyle{plain}
\bibliography{text}

\clearpage
\appendix
\section{Review}
\subsection{Algebraic Structures}\label{subsec:algstr}

A~\emph{semigroup} $(S, \circ)$ is an algebraic structure, where the operation
$\circ$ is \emph{closed}, i.e., $\circ : S\times S \rightarrow S$, and
\emph{associative}, i.e.,
$x \circ (y \circ z) = (x \circ y) \circ z$ for all $x$, $y$, and $z$ in $S$.
\emph{Commutative} (or \emph{abelian}) semigroups introduce one extra
requirement: $x \circ y = y \circ x$ for all $x$ and $y$ in $S$.

Commutative semigroups are ubiquitous. Below we list a few
notable examples, starting with the most basic one, which is, arguably, known
to every person on the planet.


\begin{itemize}
  \item Integer numbers form commutative semigroups with many operations. For
  example, the order in which numbers are \emph{added} is irrelevant, hence
  $(\mathbb{Z}, +)$ is a commutative semigroup. So are $(\mathbb{Z}, \times)$,
  $(\mathbb{Z}, \min)$ and $(\mathbb{Z}, \max)$. On the other hand, it does
  matter in which order numbers are \emph{subtracted}, hence $(\mathbb{Z}, -)$
  is not a commutative semigroup: $1-2 \neq 2-1$. In fact, $(\mathbb{Z}, -)$
  is not even a semigroup, since subtraction is non-associative:
  $1-(2-3) \neq (1-2)-3$.

  \item Boolean values form commutative semigroups $(\mathbb{B}, \vee)$,
  $(\mathbb{B}, \wedge)$, $(\mathbb{B}, \oplus)$ and $(\mathbb{B}, \equiv)$.

  \item Any commutative semigroup $(S, \circ)$ can be \emph{lifted} to the set
  $\hat{S}$ of ``containers'' of elements $S$, e.g., vectors or matrices,
  obtaining a commutative semigroup $(\hat{S}, \hat{\circ})$, where the lifted
  operation~$\hat{\circ}$ is applied to the contents of containers element-wise.
  The lifting operation $\hat{\cdot}$ can often be omitted for clarity if there
  is no ambiguity.

  The \emph{average semigroup} $(\mathbb{Z} \times \mathbb{Z}, \circ)$ is a
  simple yet not entirely trivial example of semigroup lifting. By defining
  $(t_1, c_1) \circ (t_2, c_2) = (t_1 + t_2, c_1 + c_2)$, we can aggregate
  partial \emph{totals} and \emph{counts} of a set of numbers, which allows us
  to efficiently calculate their average as $\textit{avg}(t, c) = \frac{t}{c}$.
  The average semigroup is commutative.

  \item Set union and intersection are commutative and associative operations
  giving rise to many set-based commutative semigroups. Here we highlight an
  example that motivated our research: the \emph{graph overlay} operation,
  defined\footnote{This definition coincides with that of the
  \emph{graph union} operation~\cite{1969_graph_theory_harary}. Graph union
  typically requires that given graphs are non-overlapping, hence it is not
  closed on the set of all graphs. Graph overlay does not have such a
  requirement, and is therefore closed and forms a semigroup.} as
  $(V_1, E_1) + (V_2, E_2) = (V_1 \cup V_2, E_1 \cup E_2)$,~where~$(V, E)$ is
  a standard set-based representation for directed unweighted graphs, comes from
  an algebra of graphs used in functional programming~\cite{mokhov2017algebraic}.
  See further details in Section~\ref{sec-dense-graph}.
\end{itemize}

\emph{Groups} extend semigroups by requiring the existence of the \emph{identity
element} $0 \in S$, such that $0 \circ x = x \circ 0=x$, and the \emph{inverse
element} $-x \in S$ for all $x \in S$, such that
$(-x) \circ x = x \circ (-x) = 0$. Groups provide a natural generalisation of
arithmetic \emph{subtraction}, whereby $x \circ (-y)$ denotes the subtraction of
$y$ from $x$.

A commutative semigroup $(S, \circ)$ can often be extended to a \emph{semiring}
$(S, \circ, \bullet)$ by introducing another associative (but not necessarily
commutative) operation $\bullet$ that \emph{distributes} over $\circ$, that is
\[
x \bullet (y \circ z) = (x \bullet y) \circ (x \bullet z)\\
\]
\[
(x \circ y) \bullet z = (x \bullet z) \circ (y \bullet z)
\]
hold for all $x$, $y$, and $z$ in~$S$. Since $\circ$ and $\bullet$~behave
similarly to numeric addition and multiplication, it is common to give $\bullet$
a higher precedence to avoid unnecessary parentheses, and even omit~$\bullet$~from
formulas altogether, replacing it by juxtaposition. This gives a terser and
more convenient notation, e.g., the left distributivity law becomes:
$x (y \circ z) = x y \circ x z$. We will use this notation, insofar as this does
not lead to ambiguity.

Most definitions of semirings also require that the two operations have
identities: the \emph{additive identity}, denoted by 0, such that
$0 \circ x = x \circ 0=x$, and the \emph{multiplicative identity}, denoted by 1,
such that $1x=x1=x$. Furthermore, 0 is typically required to be
\emph{annihilating}: $0x=x0=0$.

A semiring $(S, \circ, \bullet)$ is also a \emph{ring} if $(S, \circ)$ is a
group, i.e., the operation $\circ$ is invertible. One can think of rings as
semirings with subtraction.

Let us extend some of our semigroup examples to semirings:

\begin{itemize}
  \item The most basic and widely known semiring is that of integer numbers with
  addition and multiplication: $(\mathbb{Z}, +, \times)$. Since every integer
  number $x\in \mathbb{Z}$ has an inverse $-x \in \mathbb{Z}$ with respect to
  the addition operation, $(\mathbb{Z}, +, \times)$ is also a ring.
  Interestingly, integer addition can also play the role of multiplication when
  combined with the $\max$ operation, resulting in the \emph{tropical semiring}
  $(\mathbb{Z}, \max, +)$, which is also known as the \emph{max-plus algebra}.
  Unlike $+$, the $\max$ operation has no inverse, therefore
  $(\mathbb{Z}, \max, +)$ is not a ring.

  \item Boolean values form the semiring $(\mathbb{B}, \vee, \wedge)$. Note that
  $(\mathbb{B}, \wedge, \vee)$ is a semiring too thanks to the duality between
  the operations $\vee$ and $\wedge$. Furthermore,
  $(\mathbb{B}, \oplus, \wedge)$ is a ring, where every element is its own
  inverse: $x \oplus x = 0$ for $x \in \mathbb{B}$.

  \item Semirings and rings $(S, \circ, \bullet)$ can also be lifted to the set
  $\hat{S}$ of ``containers'' of elements $S$, most commonly matrices, obtaining
  $(\hat{S}, \hat{\circ}, \hat{\bullet})$. Matrices over tropical semirings, for
  example, are used for solving various path-finding problems on graphs.
\end{itemize}


\subsection{Applications of the Range Queries Problem}\label{subseq:rmqapp}
There are many natural applications of the range queries problem for a~collection of records in a~database: computing the total population of cities that are at most some distance away from a~given point, computing an average salary in a~given period of time, finding the minimum depth on a~given subrectangle on a~sea map, etc. Below, we review some of the less straightforward applications where efficient algorithms for the range queries problem are usually combined with other algorithmic ideas.
\begin{description}
\item[String algorithms and computational biology.]
It is possible to preprocess a~given string in $O(n)$ time (where $n$ is its length) so that to then find the longest common prefix of any two suffixes of the original string in constant time. This is done by first constructing the suffix array and the longest common prefix array of the string and then using an efficient RMQ algorithm.

\item[Computational geometry.] Algorithms for the range queries problems can be used together with the scanning line technique to solve efficiently various problems like: given a~set of segments on a~line, compute the number of intersecting pairs of segments; or, given a~set of rectangles and a~set of points on a~plane, compute, for each each rectangle, the number of points it contains. 


\end{description}


\subsection{Known Approaches to Range Queries}\label{subsec:approaches}
In this subsection, we give a~brief overview of a~rich variety of known algorithms for the range queries problem. We say that an algorithm has type $(f(n), g(n))$ if it spends $f(n)$ time on preprocessing the input sequence, and then answers any query in time $g(n)$.

\begin{description}
\item[No preprocessing.] A~naive algorithm skips the preprocessing stage and answers a~query $(l,r)$ directly in time $O(r-l+1)$. It therefore has type $(O(1), O(n))$.

\item[Full preprocessing.] One may precompute the answers to all possible queries to be able to answer any subsequent query immediately. Using dynamic programming, it is possible to precompute the answers to all $\Theta(n^2)$ queries in time $O(n^2)$: for this, it is enough to process the queries in order of increasing length. This gives an $(O(n^2), O(1))$ algorithm.

\item[Fixed length queries (sliding window).] In case one is promised that all the queries are going to have the same length~$m$, it is possible to do an~$O(n)$ time preprocessing and then to answer any query in time $O(1)$. For this, one partitions the input sequence of size~$n$ into $\frac nm$ blocks of size~$m$. For each block, one computes all its prefixes and suffixes in time $O(m)$. The overall running time is $O(\frac nm \cdot m)=O(n)$. Then, each query of length~$m$ touches at most two consecutive blocks and can be answered by taking a~precomputed suffix of the left block and a~precomputed prefix of the right block in time $O(1)$. This, in particular, implies that, given a~sequence of length~$n$ and an integer $1 \le m \le n$, one may slide a~window of length~$m$ through the sequence and to output the answer to all such window queries in time $O(n)$.

\item[Prefix sums.] In case, a~semigroup operation has an {\em easily computable inverse}, it is easy to design an $(O(n), O(1))$ algorithm. We illustrate this for a~group $(\mathbb{Z}, +)$. Given $x_1, \dotsc, x_n$, we compute $(n+1)$ prefix sums:
\(S_0=0,\, S_1=x_1,\, S_2=x_1+x_2, \dotsc, S_n=x_1+\dotsb+x_n\,.\)
This can be done in time $O(n)$ since $S_i=S_{i-1}+x_i$. Then, the answer to any query $(l,r)$ is just $S_r-S_{l-1}$.

Note that the algorithm above solves a~{\em static} version of the problem. For the {\em dynamic} version, where one is allowed to change the elements of the input sequence, there is a~data structure known as Fenwick's tree~\cite{DBLP:journals/spe/Fenwick94}. It allows to change any element as well as to retrieve any prefix sum in time $O(\log n)$.

\item[Block decomposition.] One decomposes the input range $(1,n)$ into $n/b$~blocks of length~$b$ and then computes, for each block, all its prefixes and suffixes. This can be done in time $O(n)$. Then, for each query, if it lies entirely in a~block, we compute the answer directly (hence, in time at most $O(b)$). If it crosses a~number of blocks, we decompose it into a~suffix of a~block, a~number of consecutive blocks, and a~prefix of a~block. This allows us to answer such long queries in time $O(n/b)$. Setting $b=\sqrt{n}$ to balance both cases, we get a~$(O(n), O(\sqrt{n}))$-algorithm. 

\item[Sparse table.] This data structure works for idempotent semigroups ({\em bands}) and has type $(O(n\log n), O(1))$. We illustrate its main idea for the {\em range minimum query} (RMQ)  (i.e., for a~semigroup $(\mathbb{Z}, \min)$). One precomputes answers to $O(n\log n)$ queries---namely, those whose length is a~power of~2. More formally, for all $0 \le k \le \log_2n$ and $1 \le i \le n-2^k+1$, let $S_{k,i}$ be the answer to a~query $(i, i+2^k-1)$:
\(S_{k,i}=x_i \circ x_{i+1} \circ \dotsb \circ x_{i+2^k-1} \, .\)
Since any range of length $2^k$ consists of two ranges of length $2^{k-1}$, one can compute all $S_{k,i}$'s in time $O(n\log n)$ using dynamic programming. Then, any range $(l,r)$ can be covered by two precomputed ranges: if $k$~is the smallest integer such that $2^k \ge (r-l+1)/2$, then the answer to this query is $S_{k,l} \circ S_{k,r-2^k+1}$ (idempotency is required since we are covering the range, but not partitioning it). This gives an $(O(n\log n), O(1))$ algorithm.

\item[Hybrid strategy.] One may extend the block decomposition 
approach further and use one efficient data structure on top of 
blocks and possibly a~different data structure for each block. 
Namely, we decompose the input range into blocks of size~$b$, 
use a~$(p_1(n), q_1(n))$-algorithm on top of blocks and
a~$(p_2(n), q_2(n))$-algorithm within each block. The resulting algorithm then has type
\[(O(n + p_1(n / b) + (n / b) \cdot p_2(b), O(q_1(n/b) + q_2(b))) \, .\]
E.g., for the range minimum problem, combining the sparse table data structure ($p_1(n)=O(n\log n)$, $q_1(n)=O(1)$) with no preprocessing technique ($p_2(n)=O(1)$, $q_2=O(n)$) and block size $b=\log n$, gives an~$(O(n), O(\log n))$-algorithm. Another example: using sparse table in both cases (with block size $b=\log n$) gives an $(O(n\log\log n), O(1))$ algorithm. 


\item[Segment tree.] The segment tree data structure is also based on dynamic programming ideas and works for any semigroup. Consider the following complete binary tree with $O(n)$ nodes: the root is labeled by a~query $(1,n)$, the two children of each inner node $(l,r)$ are labeled by the left and right halves of the current query (i.e., $(l,m)$ and $(m+1,r)$ where $m=(l+r)/2$), the leaves are labeled by length one queries. Going from leaves to the root, one can precompute the answers to all the queries in this tree in time $O(n)$. Then, it is possible to show that any query $(l,r)$ can be  partitioned into $O(\log n)$ queries that are stored in the tree. This gives an $(O(n), O(\log n))$ algorithm. It should be noted that the segment tree can also be used to solve the dynamic version of the range queries problem efficiently: to change the value of one of the elements of the input sequence, one needs to adjust the answers to $O(\log n)$ queries stored in the tree.

\item[Algorithms by Yao and by Alon and Schieber.] Yao~\cite{DBLP:conf/stoc/Yao82} showed that, for any semigroup, it is possible to preprocess the input sequence in time $O(n)$ so that to further answer any query in time $O(\alpha(n))$ where $\alpha(n)$ is the inverse Ackermann function and proved a~matching lower bound. Later, Alon and Schieber~\cite{Alon87optimalpreprocessing} studied a~more specific question: what is the minimum number of semigroup operations needed at the preprocessing stage for being able to then answer any query in at most $k$~steps? They proved matching lower and upper bounds for every~$k$. As a~special case, they show how to preprocess the input sequence in time $O(n\log n)$ so that to answer any subsequent query by applying at most one semigroup operation. This algorithm generalizes the sparse table data structure (as it does not require the semigroup to be idempotent) and is particularly easy to describe. It is based on the divide-and-conquer paradigm. Let $m=n/2$. We precompute answers to all queries of the form $(i,m)$ and $(m+1,j)$, where $1 \le i \le m$ and $m+1 \le j \le n$ (i.e., suffixes of the left half and prefixes of the right half). This allows to answer in a~single step any query that intersects the middle of the sequence, i.e., queries $(l,r)$ such that $l \le m \le r$. All the remaining preprocessing boils down to answering queries that lie entirely in either left or right half. This can be done recursively for the halves. The corresponding recurrence relation $T(n)=2T(n/2)+O(n)$ implies an upper bound $O(n\log n)$ on preprocessing time (and hence, the number of semigroup operations).

\item[$(O(n), O(1))$-type algorithms.]
There is a~sequence of $(O(n), O(1))$-type algorithms designed specifically to the range minimum query problem and a~related problem called least common ancestor (LCA) \cite{DBLP:journals/siamcomp/BerkmanV93,
DBLP:journals/jal/BenderFPSS05,
DBLP:conf/latin/BenderF00, 
DBLP:conf/cpm/FischerH06}. Here, we briefly sketch the algorithm by Bender and Farach-Colton. 
Its main idea is to first reduce RMQ to LCA (the least common ancestor problem). One then reduces LCA back to RMQ and notices that the resulting instance of RMQ has a~convenient property: the difference between any two consecutive elements is $\pm 1$. This property allows to do the following trick: we precompute answers to all relatively short queries (this can be done even without knowing the input sequence because of the $\pm 1$ property); we also partition the input sequence into blocks and build a~segment tree out of these blocks.
\end{description}


\subsection{Dense Graph Representation}\label{sec-dense-graph}

Let us revisit the graph semigroup defined in Section~\ref{subsec:algstr}.
We will denote it by $(G_U, +)$, where $G_U$ is the set of directed graphs whose
vertices come from a universe $U$, that is, if $(V, E) \in G_U$ then
$V \subseteq U$ and $E \subseteq V \times V$. Recall that the graph overlay
operation $+$ is defined as

\[
(V_1, E_1) + (V_2, E_2) = (V_1 \cup V_2, E_1 \cup E_2).
\]

\noindent
The algebra of graphs presented in~\cite{mokhov2017algebraic} also defines
the \emph{graph connect} operation $\rightarrow$:

\[
(V_1, E_1) \rightarrow (V_2, E_2) = (V_1 \cup V_2, E_1 \cup E_2 \cup V_1 \times V_2).
\]

This operation allows us to ``connect'' two graphs by adding edges from every
vertex in the left-hand graph to every vertex in the right-hand graph, possibly
introducing self-loops if $V_1 \cap V_2 \neq \emptyset$. The operation is
associative, non-commutative and distributes over $+$. Note, however, that the
empty graph $\varepsilon = (\emptyset, \emptyset)$ is the identity for both
overlay and connect operations: $\varepsilon + x = x + \varepsilon = x$ and
$\varepsilon \rightarrow x = x \rightarrow \varepsilon = x$, and consequently
the annihilating zero property does not hold, which makes this algebraic
structure not a~semiring according to the classic semiring definition.

By using the two operations one can construct any graph starting from primitive
single-vertex graphs. For example, let $U=\{1,2,3\}$ and $k$ stand for a
single-vertex graph $({k}, \emptyset)$. Then:

\begin{itemize}
  \item $1 \rightarrow 2$ is the graph comprising a single edge $(1,2)$, i.e.
  $1 \rightarrow 2 = (\{1,2\}, \{(1,2)\})$.
  \item $1 \rightarrow (2 + 3)$ is the graph $(\{1,2,3\}, \{(1,2),(1,3)\})$.
  \item $1 \rightarrow 2 \rightarrow 3$ is the graph $(\{1,2,3\}, \{(1,2),(1,3),(2,3)\})$.
\end{itemize}

\noindent
Clearly any sparse graph $(V, E)$, i.e. a graph with a sparse connectivity
matrix, can be constructed by a linear-size expression:

\[
(V, E) = \sum_{v \in V} v + \sum_{(u,v) \in E} u \rightarrow v.
\]

\noindent
But what about complements of sparse graphs, i.e. graphs with dense
connectivity matrices? It is not difficult to show that by applying the dense
linear operator we can obtain a linear-size circuit comprising 2-input gates
$+$ and $\rightarrow$ for any dense graph.

Let $A$ be a dense matrix of size $n\times n$. Our goal is to construct the
graph $G_A = (\{1, \dots, n\}, E)$ such that $(i,j) \in E$ whenever $A_{ij}=1$.

First, we compute the dense linear operator $\mathbf{y} = A \mathbf{x}$ over the
(commutative) graph semigroup~$+$, where $\mathbf{x} = (1, 2, \dots, n)$, i.e.,
$x_j$ is the primitive graph comprising a single vertex~$j$, obtaining
graphs~$y_i$ that comprise sets of isolated vertices corresponding to the rows
of matrix~$A$:

\[
y_i = \sum_{A_{ij}=1} j \, .
\]

The resulting graph $G_A = (\{1, \dots, n\}, E)$ can now be obtained by using
the connect operation~$\rightarrow$ to connect $i$ to all vertices $y_i$, and
subsequently overlaying the results:

\[
G_A = \sum_{i=1}^{n} i \rightarrow y_i.
\]

Thanks to the linear-size construction for the dense linear operator, the size
of the circuit computing $G_A$ is $O(n)$. This allows us to store dense graphs
on $n$ vertices using $O(n)$ memory, and perform basic transformations of dense
graphs in $O(n)$ time. We refer the reader to~\cite{mokhov2017algebraic} for
further details about applications of algebraic graphs in functional programming
languages.

\end{document}